\newtheorem{theorem}{\bf Theorem}[section]
\newtheorem{proposition}{\bf Proposition}[section]
\newtheorem{lemma}{\bf Lemma}[section]
\newtheorem{remark}{\bf Remark}[section]
\newtheorem{definition}{\bf Definition}[section]
\tikzset{mynode/.style = {
    circle,
    minimum size=2pt,
    draw=black, fill=black}
}
\definecolor{darkblue}{rgb}{0,0,.75}
\definecolor{dark-gray}{gray}{0.3}
\definecolor{dkgray}{rgb}{.4,.4,.4}
\definecolor{dkblue}{rgb}{0,0,.5}
\definecolor{medblue}{rgb}{0,0,.75}
\definecolor{rust}{rgb}{0.5,0.1,0.1}
\numberwithin{equation}{section} 
\providecommand{\mathbold}[1]{\bm{#1}}  
\renewcommand{\phi}{\varphi}
\newcommand{\e}{\varepsilon}
\renewcommand{\mid}{\mathrel{\mathop{:}}} 
\newcommand{\zerovct}{\vct{0}} 
\newcommand{\onemtx}{\bm{1}}
\providecommand{\mathbbm}{\mathbb} 
\newcommand{\R}{\mathbbm{R}}
\newcommand{\FF}{\mathbbm{F}}
\newcommand{\encl}{\operatorname{enc}}
\newcommand{\diff}[1]{\mathrm{d}{#1}}
\newcommand{\diam}{\mathrm{diam}}
\newcommand{\argmin}{\operatorname*{arg\; min}}
\newcommand{\Prob}{\mathbbm{P}}
\newcommand{\Expect}{\operatorname{\mathbb{E}}}
\newcommand{\normal}{N}
\newcommand{\vct}[1]{\mathbold{#1}}
\newcommand{\mtx}[1]{\mathbold{#1}}
\newcommand{\im}{\operatorname{im}}
\newcommand{\supp}[1]{\operatorname{supp}(#1)}
\newcommand{\Proj}{\ensuremath{\mtx{\Pi}}} 
\newcommand{\ip}[2]{\langle {#1}, {#2} \rangle}
\newcommand{\norm}[1]{\Vert {#1}\Vert}
\newcommand{\conv}{\operatorname{conv}}
\DeclareMathOperator{\rk}{rk}
\pgfplotsset{compat=1.14}
\begin{document}

\title{Persistent homology for low-complexity models}
\author{Martin Lotz}
\address{Mathematics Institute\\ The University of Warwick}

\subject{applied mathematics, computational geometry, topological data analysis}

\keywords{topological data analysis, persistent homology, compressed sensing, random projections}

\corres{Martin Lotz\\
\email{martin.lotz@warwick.ac.uk}}

\begin{abstract}
We show that recent results on randomized dimension reduction schemes that exploit structural properties of data can be applied in the context of persistent homology. In the spirit of compressed sensing, the dimension reduction is determined by the Gaussian width of a structure associated to the data set, rather than its size, and such a reduction can be computed efficiently. We further relate the Gaussian width to the doubling dimension of a finite metric space, which appears in the study of the complexity of other methods for approximating persistent homology. We can therefore literally replace the ambient dimension by an intrinsic notion of dimension related to the structure of the data.
\end{abstract}


\begin{fmtext}


\section{Introduction}
Persistent homology is an approach to topological data analysis (TDA) that allows to infer multi-scale qualitative information from noisy data. Starting from a {\em point cloud} representing the data, persistent homology extracts topological information about the structure from which the data is assumed to be sampled from (such as number of connected components, holes, cavities, ...) by associating multi-scale invariants, the {\em barcodes} or {\em persistence diagrams} to the data. These invariants measure topological features of neighbourhoods of the data at different scales; features that {\em persist} over large scale ranges are considered relevant, while short lived features are considered noise. 

Despite excellent theoretical guarantees and plenty of practical applications, a large number of data points ($n$) and dimension ($d$) can cause significant challenges to the computation of persistent homology. Much current work in the field is devoted to addressing this challenge, the underlying rationale being that the true complexity of the data is often smaller than it appears. 
Our focus is on the analysis of randomized dimension reduction schemes at the point cloud level that depend purely on structural properties of the data points, and not on the size of the data set. Specifically, we show that it is possible to approximate the persistent homology of a point cloud from its projection to a subspace of dimension proportional to a measure of intrinsic dimension, the {\em Gaussian width} of an underlying structure. 
\end{fmtext}

\maketitle

A consequence of our results is that if we assume that the $d$-dimensional data points are $s$-sparse (having at most $s$ non-negligible entries) in a suitable basis or frame (for example, image data in a Fourier or wavelet basis), then we can work in an ambient dimension of order $O(s\log(d/s))$; the dimension reduction is the same as that achievable for sparse signal recovery in compressed sensing. 

The Gaussian width is closely related to another intrinsic dimension parameter, the {\em doubling dimension} of a metric space. While previous work has shown that the doubling dimension can replace the ambient dimension in the analysis of various approaches to computing persistent homology, it follows that one can also literally embed the data into an ambient space of dimension proportional to it.
The dimension reduction affects the very first part of the persistent homology pipeline, where it can reduce the size of the input. Such a reduction is useful in constructions that depend on the ambient dimension, while the independence of the number of data points is useful in applications where the size of the data set is not known in advance or may change. In addition, we will see that the reduction can be computed efficiently under certain circumstances.

We point out that the notion of ``low complexity'' used here differs from the usual manifold assumption, where the data is assumed to lie close to a lower dimensional set whose topology one is interested in, and where it is the intrinsic dimension of that manifold that determines the complexity of the problem. In our setting, we do not make such an assumption, but only consider structural properties of any potential data points. These are related to the type of data we consider and can be known or estimated in advance.
To illustrate the difference between these notions of complexity, consider image data from the Columbia Object Image Library~\cite{nene1996columbia}, which contains photos of objects rotated around a fixed axis. The images associated to one single object lie on a one-dimensional structure (a circle). If the database would be extended to include more perspectives on each object, then that structure will change to a sphere. Independent of this, the individual images are images, and as such are compressible and lie close to a low-dimensional subspace arrangement. 
It is this latter structure that determines the target dimension of the random dimension reduction, regardless of the shape of the manifold around which the images cluster, and independent of the number of images present. 

As mentioned before, a crucial parameter in our context is the Gaussian width of a set $S$,
\begin{equation*}
  w(S) = \Expect[\sup_{\vct{x}\in S}\ip{\vct{x}}{\vct{g}}],
\end{equation*}
where the expectation is over a standard Gaussian vector $\vct{g}$.
The Gaussian width features prominently in the study of Gaussian processes~\cite{ledoux2013probability}, in geometric functional analysis~\cite{artstein2015asymptotic}, learning theory~\cite{Bartlett2003}, and in compressed sensing~\cite[Chapter 9]{Foucart2013}. We show that it also determines the dimension in which persistent topological information can be recovered from Euclidean point clouds without much loss. Formally, this means that the persistence diagrams for the original and for the projected data are close in some metric, which can be formalized using the interleaving distance on persistence modules. 
For the precise definition of these and other concepts used in the statement of the result, see Section~\ref{sec:persistent}.

\begin{theorem}\label{thm:mainA}
Let $X\subset M\subset \R^d$ with $X$ finite, let $\delta\in (0,1)$, and $T=\{(\vct{x}-\vct{y})/\norm{\vct{x}-\vct{y}}\mid \vct{x},\vct{y}\in M\}$. Assume that
\begin{equation*}
m \geq \frac{\left(w(T)+\sqrt{2\log(2/\delta)}\right)^2}{\e^2}+1.
\end{equation*}
Then for a random $m\times d$ matrix $\mtx{G}$ with normal distributed entries $g_{ij}\sim N(0,1/m)$,
with probability at least $1-\delta$, the persistence modules
associated to the \v{C}ech, Vietoris-Rips, and Delaunay complexes of $X$ and $\mtx{G}X$ with respect to the Euclidean distance are multiplicatively $(1-\e)^{-1}$-interleaved. 
\end{theorem}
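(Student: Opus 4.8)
The plan is to isolate a single ``good event'' on the random matrix $\mtx{G}$ — namely that $\mtx{G}$ distorts every pairwise distance within $M$ by at most a factor $(1\pm\e)$ — and then, on that event, to convert the distance distortion into a multiplicative interleaving of the filtration functions of each of the three complexes in turn. The probabilistic content lives entirely in the first step; the topological content is a deterministic consequence of the distance bound.

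First I would extract the geometric meaning of the dimension hypothesis. For $\vct{x}\neq\vct{y}\in M$, writing $u=(\vct{x}-\vct{y})/\norm{\vct{x}-\vct{y}}$, we have $\norm{\mtx{G}(\vct{x}-\vct{y})}/\norm{\vct{x}-\vct{y}}=\norm{\mtx{G}u}$ with $u\in T\subset\sphere{d-1}$. The engine is the classical concentration estimate for Gaussian matrices governed by the Gaussian width (Gordon's theorem together with Gaussian Lipschitz concentration):
\[
\Prob\left\{\sup_{u\in T}\bigl|\norm{\mtx{G}u}-1\bigr|>\frac{w(T)+t}{\sqrt{m-1}}\right\}\le \exp(-t^2/2).
\]
Taking $t=\sqrt{2\log(2/\delta)}$ makes the right-hand side at most $\delta/2$, and the hypothesis on $m$ is exactly $\e\ge (w(T)+t)/\sqrt{m-1}$, so with probability at least $1-\delta$ we obtain $(1-\e)\norm{\vct{x}-\vct{y}}\le\norm{\mtx{G}(\vct{x}-\vct{y})}\le(1+\e)\norm{\vct{x}-\vct{y}}$ for all $\vct{x},\vct{y}\in M$. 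The crucial point — and the reason $w(T)$ rather than $\log\abs{X}$ appears — is that this is a uniform statement over the (possibly infinite) continuum of directions $T$ attached to $M$, so no union bound over pairs of $X$ is used; since $X\subset M$, the bound holds a fortiori for all pairs in $X$.

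Next I would turn distance preservation into interleavings of the filtration values. For the Vietoris--Rips complex this is immediate, since the birth value of a simplex $\sigma=\{\vct{x}_0,\dots,\vct{x}_k\}$ is its diameter $\max_{i,j}\norm{\vct{x}_i-\vct{x}_j}$, which is multiplied by a factor in $[1-\e,1+\e]$. For the \v{C}ech complex the subtlety is that the filtration value is the radius of the smallest enclosing ball, whose centre need not be a data point, so the relevant direction need not lie in $T$. I would bypass this by using the minimax identity
\[
R(P)^2=\max_{\lambda\in\Delta}\ \tfrac12\sum_{i,j}\lambda_i\lambda_j\norm{\vct{p}_i-\vct{p}_j}^2,
\]
which expresses the squared enclosing radius purely through the pairwise distances of $P$; since each $\norm{\vct{p}_i-\vct{p}_j}^2$ is distorted by a factor in $[(1-\e)^2,(1+\e)^2]$ and these uniform bounds pass through the maximum over the simplex $\Delta$, the \v{C}ech value is also scaled by a factor in $[1-\e,1+\e]$. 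For the Delaunay (alpha) complex, whose combinatorial type genuinely changes under projection, I would not compare the complexes directly but instead invoke the nerve lemma: both the \v{C}ech and the Delaunay filtrations are homotopy equivalent, compatibly with inclusions, to the union-of-balls filtration, so their persistence modules are isomorphic, and the Delaunay interleaving follows from the \v{C}ech one.

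Finally I would assemble the constant. In each case the filtration value satisfies $f_{\mtx{G}X}(\sigma)\in[(1-\e)f_X(\sigma),(1+\e)f_X(\sigma)]$, so $f_X(\sigma)\le r$ forces $f_{\mtx{G}X}(\sigma)\le(1+\e)r$ while $f_{\mtx{G}X}(\sigma)\le r$ forces $f_X(\sigma)\le r/(1-\e)$; since $(1-\e)^{-1}\ge 1+\e$ for $\e\in(0,1)$, the inclusions of sublevel complexes witness that the two filtrations, and hence their persistence modules, are multiplicatively $(1-\e)^{-1}$-interleaved on the event of probability at least $1-\delta$ established above. I expect the main obstacle to be precisely the \v{C}ech and Delaunay cases: because enclosing-ball centres and Voronoi structure leave the data set, the bi-Lipschitz control on pairwise distances does not transfer naively, and it is the minimax identity (for \v{C}ech) together with the nerve-lemma reduction to the union of balls (for Delaunay) that lets the argument succeed without enlarging $T$.
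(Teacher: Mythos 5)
Your proposal is correct and follows the same three-step architecture as the paper's proof: (i) Gordon's theorem plus Gaussian concentration applied to the set $T$ of normalized differences of $M$, so that the $(1\pm\e)$ distortion holds uniformly over a possibly infinite $T$ with no union bound over pairs (this is exactly the paper's Theorem~\ref{thm:j-l}); (ii) transfer of the distance distortion to smallest enclosing balls (the paper's Theorem~\ref{thm:equivalence}); (iii) Sheehy's interleaving lemma (Lemma~\ref{lem:sheehy}), with Delaunay reduced to \v{C}ech via the standard equivalence with the offset filtration. The one place you genuinely diverge is step (ii): you invoke the minimax identity $\rho(S)^2=\max_{\lambda}\tfrac12\sum_{i,j}\lambda_i\lambda_j\norm{\vct{x}_i-\vct{x}_j}^2$ (maximum over convex weights) and pass the uniform multiplicative bounds through the maximum, whereas the paper avoids any duality: it writes the center as a convex combination of the boundary points (Lemma~\ref{le:conv}), applies the variance identity (Lemma~\ref{le:ball}) twice, and uses only weak optimality of the weighted centroid. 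Your identity is true and standard (it is the Lagrangian dual of the smallest-enclosing-ball program), but as stated it is an unproven black box whose honest justification requires exactly the paper's Lemma~\ref{le:ball} plus a minimax swap; the paper's route is the more elementary one, in effect proving only the weak-duality inequality that is actually needed, once in each direction. So your version buys a one-line monotonicity argument at the cost of importing duality, while the paper's is self-contained. Three small points to tighten: your displayed two-sided Gordon bound should read $2\exp(-t^2/2)$, since each one-sided tail is $\exp(-t^2/2)=\delta/2$ (consistent with your subsequent union bound); your display centres $\norm{\mtx{G}\vct{u}}$ at $1$, which presumes the normalization $g_{ij}\sim N(0,1/E_m^2)$ of Theorem~\ref{thm:j-l} rather than the $N(0,1/m)$ in the statement of Theorem~\ref{thm:mainA} --- a discrepancy present in the paper itself, absorbed via $w(T)+t\leq\e\sqrt{m-1}\leq\e E_m$; and since $X$ and $\mtx{G}X$ are distinct vertex sets, your ``inclusions of sublevel complexes'' are really the simplicial maps induced by $\vct{x}\mapsto\mtx{G}\vct{x}$ and its preimage, whose triangle identities must be (easily) checked as in Lemma~\ref{lem:sheehy}. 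On the credit side, you treat Vietoris--Rips explicitly via diameter scaling, a case the paper's Lemma~\ref{lem:sheehy} does not state (it covers only \v{C}ech and Delaunay), so your write-up is slightly more complete there.
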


Theorem~\ref{thm:mainA} is based on, and recovers as a special case, an extension of the Johnson-Lindenstrauss Theorem by Sheehy~\cite{sheehy2014persistent} (see the example with the Gaussian width of discrete sets below). One crucial difference to the classical approach is that the Gaussian width allows us to do better when the data $X$ has a particularly simple structure. 
Theorem~\ref{thm:mainA} should be seen as a prototype for a whole class of dimensionality reduction results and is, as stated, not practical. More practically relevant variants of Theorem~\ref{thm:mainA} are discussed in section Section~\eqref{sec:efficient}. 

Before proceeding, we present some examples of sets where the Gaussian width is well known. We use the notation $w^2(T) := w(T)^2$ for the square of the Gaussian width.\\

\textbf{Discrete set.} Let $T=\{\vct{x}_1,\dots,\vct{x}_n\}$ be a set of $n$ points with $\norm{\vct{x}_i}_2=1$ for $1\leq i\leq n$. Then
\begin{equation}\label{eq:w-discrete}
  w^2(T) \leq 2\log(n).
\end{equation}
A proof can be found in~\cite[Sections 2.5]{boucheron2013concentration}.\\

\textbf{Spheres and balls.} Let $T=S^{m-1}$ be an $(m-1)$-dimensional unit sphere in $\R^d$. Then the invariance property of the Gaussian distribution implies
\begin{equation*}
  w^2(T) = \Expect[\norm{\overline{\vct{g}}}]^2\leq m,
\end{equation*} 
where $\overline{\vct{g}}=(g_1,\dots,g_m)^T$ is the projection of a Gaussian vector $\vct{g}\in \R^d$ to the first $m$ coordinates.\\

\textbf{Sparse vectors.} Let $T_s = \{ \vct{x}\in S^{d-1} \mid |\supp{\vct{x}}|\leq s\}$ be the set of $s$-sparse unit vectors.
As shown by Rudelson and Vershynin~\cite{Rudelson2008}, the squared Gaussian width of this set is bounded by
\begin{equation}\label{eq:w-sparse}
  w^2(T_s) \leq C\cdot s\log(d/s),
\end{equation}
where $C$ is some constant. As the Gaussian width is orthogonally invariant, it is enough to require that the elements of $\vct{x}$ are sparse in some fixed basis. For example, we could have a collection of compressed images that are sparse in a discrete cosine or wavelet basis, or signals that are sparse in a frequency domain.\\ 

\textbf{Low-rank matrices.} Let $M_r = \{ \vct{X}\in \R^{d_1\times d_2} \mid \norm{\mtx{X}}_F=1, \ \rk(\mtx{X})\leq r\}$ be the set of matrices of rank at most $r$ and unit Frobenius norm, where $\norm{\mtx{X}}_F^2 = \sum_{i,j} X_{ij}^2$.
It can be shown that
\begin{equation*}
  w^2(M_r) \leq C \cdot r(d_1+d_2)
\end{equation*}
for some constant $C$, see~\cite{recht2010guaranteed,candes2011tight} for a derivation and more background. Examples of low-rank matrices or approximately low-rank matrices abound, including images, Euclidean distance matrices, correlation matrices, matrices arising from the discretization of differential equations, or recommender systems. One can also consider low-rank tensors (with respect to several notions of rank).\\

\textbf{Linear images.}
Assume that $T=\mtx{A}S$, where $S\subset \R^{d}$ and $\mtx{A}\in \R^{m\times d}$. Then the squared Gaussian width of $T$ can be bounded in terms of that of $S$ and the condition number $\kappa(\mtx{A})$ of $\mtx{A}$,
\begin{equation*}
  w^2(T)\leq \kappa^2(\mtx{A})w^2(S).
\end{equation*}
See~\cite{effective} for a derivation of this bound in a more general context. This is useful when considering the cosparse signal recovery setting~\cite{Nam2013}, in which the signals of interest are sparse after applying some (not necessarily invertible) linear transformation.\\

\textbf{Convex cones.} 
Let $T=C\cap S^{d-1}$, where $C$ is a convex cone (a convex set with $\lambda \vct{x}\in C$ if $\vct{x}\in C$ and $\lambda\geq 0$). The Gaussian width of $C\cap S^{d-1}$ differs from an invariant of the cone, the statistical dimension $\delta(C)$, by at most one~\cite[Prop 10.2]{edge}. It is known that $\delta(C) = d/2$ for self-dual cones (this includes the orthant and cone of positive semidefinite matrices), $\delta(C) \approx \log(d)$ for $C=\{\vct{x} \mid x_1\leq \cdots \leq x_d\}$, and $\delta(C) \approx d\sin^2(\alpha)$
for the circular cone of radius $\alpha$~\cite[Chapter 3]{edge}.
Moreover, approximations are known for the squared Gaussian width of the descent cones of the $1$-norm~\cite{Stojnic2009} and the nuclear norm~\cite{CRPW:12}, see also \cite[Chapter 4]{edge}.

\subsection{Considerations} We discuss some issues and extensions related to Theorem~\ref{thm:mainA}. These are concerned with efficiency, applications, limitations, and robustness.

\subsubsection{Efficiency}\label{sec:efficient}
In many applications, the computational cost of multiplying the data with a dense Gaussian matrix is likely to offset any potential gains of working in a lower dimension~\cite{sarlos2006improved}. In persistent homology, however, where a first step consists of computing the pairwise distances of $n$ points, the projection has to be computed only $n$ times, followed by $n(n-1)/2$ distance computations in a lower dimension. It follows that if the dimension is fixed and the number of samples is large enough, any projection
to a lower dimension will eventually lead to computational savings. That being said, recent results around the Johnson-Lindenstrauss Theorem can be used to extend Theorem~\ref{thm:mainA} (up to constants and logarithmic factors) to a large class of linear maps, including subgaussian matrices~\cite{dirksen2016dimensionality}, sparse Johnson-Lindenstrauss transforms~\cite{bourgain2015toward}, and matrices satisfying a classical Restricted Isometry Property~\cite{oymak2015isometric, krahmer2011new}. Strikingly, in~\cite{oymak2015isometric} the authors derived a ``transfer theorem" that shows that one can use so-called RIP (Restricted Isometry Property) matrices with only minor loss. Such matrices have been studied extensively in compressed sensing~\cite{Foucart2013}, and include the SORS (subsampled orthogonal with random sign) matrices. These are defined as matrices of the form $\mtx{A}=\mtx{HD}$, where $\mtx{H}$ is an $m\times d$ matrix arising from uniformly sampling $m$ rows from a unitary matrix with entries bounded by $\Delta/\sqrt{d}$ in absolute value for a constant $\Delta$, and $\mtx{D}$ is a diagonal matrix with a uniform random sign pattern on the diagonal. Using the results of~\cite{oymak2015isometric}, we get the following variation of Theorem~\ref{thm:mainA}.

\begin{theorem}\label{thm:oymak-main}
Under the conditions of Theorem~\ref{thm:mainA}, for a suitable constant $C$ and
\begin{equation*}
  m\geq C \cdot \Delta^2 (1+\log(1/\delta))^2\log^4(d) \frac{w^2(T)}{\e^2},
\end{equation*}
for a random $m\times d$ SORS matrix $\mtx{A}$,
with probability at least $1-\delta$, the persistence modules
associated to the \v{C}ech, Vietoris-Rips, and Delaunay complexes of $X$ and $\mtx{A}X$ are multiplicatively $(1-\e)^{-1}$-interleaved. 
\end{theorem}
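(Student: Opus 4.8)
The plan is to isolate the single probabilistic ingredient that drives the proof of Theorem~\ref{thm:mainA} and to replace it by its SORS analogue. Inspecting that proof, Gordon's comparison theorem is used only to guarantee that, with probability at least $1-\delta$, the Gaussian matrix $\mtx{G}$ acts as an approximate isometry on the secant set $T$, in the sense that
\begin{equation*}
  \sup_{\vct{u}\in T}\bigl|\,\norm{\mtx{G}\vct{u}}-1\,\bigr|\leq \e.
\end{equation*}
Since $X\subset M$, every chord $\vct{x}-\vct{y}$ with $\vct{x},\vct{y}\in X$ is a nonnegative multiple of some $\vct{u}\in T$, so this estimate immediately yields $(1-\e)\norm{\vct{x}-\vct{y}}\leq\norm{\mtx{G}\vct{x}-\mtx{G}\vct{y}}\leq(1+\e)\norm{\vct{x}-\vct{y}}$ for all pairs in $X$. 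Everything after this point in the proof of Theorem~\ref{thm:mainA} is purely geometric: a multiplicative distortion of all pairwise distances by a factor in $[1-\e,1+\e]$ rescales the filtration values of the \v{C}ech, Vietoris--Rips and Delaunay complexes by the same factor, which is exactly what produces a multiplicative $(1-\e)^{-1}$-interleaving of the associated persistence modules. This geometric implication makes no reference to the law of the matrix, so it can be reused unchanged.

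Consequently, the only thing left to establish is that a SORS matrix $\mtx{A}$ satisfies the same approximate-isometry estimate on $T$ under the stated lower bound on $m$. This is precisely the content of the transfer theorem of Oymak, Recht and Soltanolkotabi~\cite{oymak2015isometric}: a matrix satisfying a Restricted Isometry Property of suitable order, once its columns are randomized by an independent sign pattern, sketches any fixed set $T$ isometrically with a distortion controlled by the Gaussian width $w(T)$, at the cost of only logarithmic factors relative to Gordon's bound. First I would record the standard fact that an $m\times d$ SORS matrix $\mtx{A}=\mtx{H}\mtx{D}$ with entries of $\mtx{H}$ bounded by $\Delta/\sqrt{d}$ satisfies RIP of the required order (proportional to $w^2(T)/\e^2$) with high probability as soon as $m\gtrsim \Delta^2\log^4(d)$ times that order; the diagonal sign matrix $\mtx{D}$ supplies exactly the randomization that the transfer theorem demands. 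Feeding this into~\cite{oymak2015isometric} and tracking the parameters gives
\begin{equation*}
  \sup_{\vct{u}\in T}\bigl|\,\norm{\mtx{A}\vct{u}}-1\,\bigr|\leq \e
\end{equation*}
with probability at least $1-\delta$ whenever $m\geq C\,\Delta^2(1+\log(1/\delta))^2\log^4(d)\,w^2(T)/\e^2$. Combining this with the geometric implication of the first paragraph then yields the claimed $(1-\e)^{-1}$-interleaving of the persistence modules of $X$ and $\mtx{A}X$.

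The main obstacle is this second, analytic step, and specifically the bookkeeping in the transfer theorem. One must verify that the normalized secant set $T$ is admissible input to~\cite{oymak2015isometric}, convert their guarantee (typically phrased for squared norms, or in an expectation-plus-deviation form) into the uniform additive bound $\sup_{\vct{u}\in T}|\norm{\mtx{A}\vct{u}}-1|\leq\e$, and propagate the RIP order, the constant $\Delta$, and the failure probability $\delta$ through both the RIP estimate for SORS matrices and the transfer theorem, so that the exponents on $\log(1/\delta)$ and $\log(d)$ match the statement. By contrast, the reduction in the first step is essentially free: it was already carried out in the proof of Theorem~\ref{thm:mainA} and uses no property of $\mtx{G}$ beyond its action as an approximate isometry on $T$.
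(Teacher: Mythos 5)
Your proposal is correct and follows essentially the same route as the paper: the paper obtains Theorem~\ref{thm:oymak-main} by substituting the SORS guarantee of \cite{oymak2015isometric} (stated as Theorem~\ref{thm:oymak}) for the Gordon-based Theorem~\ref{thm:j-l}, and then reusing unchanged the deterministic chain of distance preservation $\Rightarrow$ enclosing-ball preservation (Theorem~\ref{thm:equivalence}) $\Rightarrow$ interleaving (Lemma~\ref{lem:sheehy}), exactly as you describe. Two minor remarks: the paper cites the packaged result of \cite{oymak2015isometric} with the stated bound on $m$ off the shelf, so your plan to re-derive it from an RIP estimate for SORS matrices plus the transfer theorem duplicates work already done in that reference; and your phrase that a $[1-\e,1+\e]$ distortion of pairwise distances ``rescales the filtration values'' of the \v{C}ech and Delaunay complexes is not literally automatic --- for those complexes the filtration values are radii of smallest enclosing balls, and passing from distance distortion to radius distortion is precisely the nontrivial content of Theorem~\ref{thm:equivalence}, which you correctly invoke by deferring to the proof of Theorem~\ref{thm:mainA}.
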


As pointed out in~\cite{oymak2015isometric}, it is likely that the term $\log^4(d)$ can be reduced to $\log(d)$.
Important examples of SORS matrices with $\Delta=1$ are the (properly renormalized) Fourier transform, the discrete cosine transform, and the Hadamard transform, which allow for fast matrix-vector products. The possibility of computing
the dimension reduction efficiently is essential to the applicability of the reduction scheme. We revisit efficiency when discussing examples in Section~\ref{sec:experiments}.

\subsubsection{Applications} Two key advantages of the proposed dimension reduction scheme are that it is non-adaptive, and the fact that the target dimension of the projection does not depend on the size of the data set. Together with the possibility of using fast projections, the method has potential applications in settings where the data set changes with time, and one would like to update topological information as new data becomes available. More precisely, consider a given data set $X=\{\vct{x}_1,\dots,\vct{x}_k\}$, and assume that a new point $\vct{x}_{k+1}$ becomes available. The most basic operation, updating the distance matrix of the point set, requires $O(kd)$ operations. Assume that we have prior information on the type of data represented by $X$ (for example, that it consists of images that have a certain sparsity structure). If we store projections $\vct{P}\vct{x}_1,\dots,\vct{P}\vct{x}_{k}$, where $\mtx{P}$ has $m\ll d$ rows, then updating the distance matrix reduces to $O(km)$ operations after computing $\mtx{P}\vct{x}_{k+1}$. The cost of this reduction is the added complexity of computing the projection $\vct{P}\vct{x}_{k+1}$; when $\vct{P}$ is a sparse Johnson-Lindenstrauss transform with sparsity $s$, then the number of operations is $sd$ and the total cost of updating the distance matrix is $O(km+sd)$. When using a partial Fourier or Hadamard matrix, the cost becomes $O(km+d\log d)$. 

The setting most likely to benefit is when the dimension is large, the effective dimension (Gaussian width) is small, and the number of samples is sufficiently large. Note that when the number of samples $n$ is less than exponential in the Gaussian width, the bound~\eqref{eq:w-discrete}, namely $2\log(n)$, for the Gaussian width can be smaller than the bound implied by the underlying structure. For example, with $n$ points representing $s$-sparse signals in $\R^d$ we would require more than $2^sd\leq n$ samples for the cardinality-independent bound~\eqref{eq:w-sparse} 
to become more effective than~\eqref{eq:w-discrete}. 
We discuss some numerical examples relating the computation time to the achievable reduction in Section~\ref{sec:experiments}.
The benefits become more marked when dealing with more complex constructions.
For example, the size of a Delaunay triangulation can be of order $n^{d/2}$, as exemplified by the cyclic polytope~\cite{Ziegler1995},
and approximations such as the mesh filtration are of order $2^{O(d^2)}n$~\cite{hudson2010topological} (other approximations, such as sparse Rips filtrations~\cite{sheehy2013linear}, also achieve bounds linear in $n$ and exponential in $d$). Note that the complexity reduction can also play a role in the analysis of constructions that do not explicitly compute the projection.

\subsubsection{Limitations} If we are only interested in the Vietoris-Rips filtration, which depends only on pairwise distances, then Theorem~\ref{thm:mainA} extends to any metric that allows for low-distortion embeddings~\cite{indyk20048}, with appropriately adjusted bounds (see, for example, \cite{krahmer2016unified} for recent work on the $\ell_1$ norm). For \v{C}ech and Delaunay complexes, however, the statement depends crucially on Euclidean characterizations of mean and variance (Section~\ref{sec:sphere}) and is therefore restricted to Euclidean spaces (or data sets that can be embedded in such), and similarity measures derived from the Euclidean distance. As many applications of persistent homology involve metric spaces with non-Euclidean metrics, it would be interesting to see to what extent a practical randomized dimensionality reduction can be performed in this context. The results extend easily to weighted Euclidean distances, as in~\cite{sheehy2014persistent}.

\subsubsection{Robustness} With some modification, the results still apply in the presence of noise. In many practical settings the data points will satisfy structural constraints only approximately. For example,
images are generally not sparse but {\em compressible}, meaning that after some transform, all
but a few coefficients will be small but not exactly zero.  
Assume that that the data points are of the form $\vct{x}_i=\vct{y}_i+\vct{\nu}_i$, with $\norm{\vct{\nu}_i}\leq \nu$. If $\nu$ is large, then
the underlying structure is lost. In general, the squared Gaussian width can increase by a factor of up to $\nu^2d$, which restricts the method to small errors. Fortunately,
the coefficients of a trigonometric or Wavelet expansion of images are known to decay quickly, with the decay depending on the regularity properties of the image~\cite{mallat2008wavelet}.

\subsection{Related work}
The application of the Johnson-Lindenstrauss Theorem in relation to persistent homology was introduced by Sheehy~\cite{sheehy2014persistent}, on which our approach is based, and independently by Kerber and Raghvendra~\cite{kerber2014approximation}. In particular, a version of the key Theorem~\ref{thm:equivalence} with different constants appeared in~\cite{sheehy2014persistent}. These articles formulated their results using a target dimension of order $\log(n)/\e^2$, where $n$ is the cardinality of the point cloud. 
The idea of projecting to a space of dimension proportional to an intrinsic dimension, based on~\cite{baraniuk2009random,clarkson2008tighter}, was mentioned in~\cite[Section 4]{sheehy2014persistent}.
The work~\cite{kerber2014approximation} extends the Johnson-Lindenstrauss Theorem to the setting of projective clustering, of which the smallest enclosing ball is a special case.
Recently, the results presented in our paper have been extended to the computation of persistent homology using the $k$-distance~\cite{arya2018persistent}, answering a question by Sheehy~\cite{sheehy2014persistent}. The $k$-distance of a point to a set of points is defined by averaging the distance to the $k$ nearest neighbours in that set, and is more robust to outliers.

The doubling dimension has been used as a measure of intrinsic dimension in topological data analysis, see~\cite[Chapter 5]{oudot2015persistence} and the references therein, and also~\cite{sheehy2013linear, choudhary2014local}.
To our knowledge, the relation of the Gaussian width to the doubling dimension of a metric space was first pointed out by Indyk and Naor~\cite{indyk2007nearest}, even though a close connection is apparent in work on suprema of Gaussian processes~\cite{talagrand2014upper}. We revisit this relation in our context, with matching upper and lower bounds, in Section~\ref{sec:doubling}.

The use of the Gaussian width in compressed sensing was pioneered by Rudelson and Vershynin~\cite{Rudelson2008}, and has, in combination with Gordon's inequality, come to play a prominent role as a dimension parameter in the development of the theory~\cite{Foucart2013}. The Gaussian width also plays an important role in the analysis of signal recovery by convex optimization, as shown by~\cite{Stojnic2009} and generalized in~\cite{CRPW:12}, and a variation of the Gaussian width for convex cones, the statistical dimension, determines the location of phase transitions for the success probability of such problems~\cite{edge}. As far as we are aware, the Gaussian width has not yet been studied in the context of persistent homology. We would also like to point out that this notion of width is different from the notion of {\em smallest directional width} used in computational geometry; see~\cite{agarwal2004approximating} for more on this concept.

There has been extensive work on complexity reduction across all other parts of the persistent homology pipeline. These include subsampling techniques~\cite{chazal2015subsampling}, approaches to reduce the complexity of a filtration~\cite{de2004topological,chazal2008towards,sheehy2013linear,oudot2015zigzag}, and ways to improve on the matrix reduction~\cite{boissonnat2014simplex,boissonnat2015compressed,chen2011persistent,bauer2014clear,mendoza2017parallel}. We refer to~\cite{oudot2015persistence,kerber2016persistent,otter2017roadmap} for an overview and further references. 

\subsection{Outline of contents}
In Section~\ref{sec:persistent} we review in some detail the necessary prerequisites from persistent homology. This section also presents the basic interleaving result of Sheehy that links the Johnson-Lindenstrauss Theorem to the interleaving distance of persistence modules. Section~\ref{sec:gordon} reviews the Johnson-Lindenstrauss Theorem in the version of Gaussian matrices and Gaussian width, which is based on Gordon's inequalities for the expected suprema of Gaussian processes. Section~\ref{sec:sphere} presents a version of the Johnson-Lindenstrauss Theorem for smallest enclosing balls, which slightly improves a corresponding result by Sheehy~\cite{sheehy2014persistent}. This section also outlines a new proof of this result based on Slepian's Lemma. A direct consequence is a proof of Theorem~\ref{thm:mainA}. 
Section~\ref{sec:doubling} relates the Gaussian width to another intrinsic dimensionality parameter, the doubling dimension. Section~\ref{sec:experiments} presents some basic numerical experiments that illustrate that the dimensionality reduction can work in practice, while Section~\ref{sec:conclusion} discusses some further directions.

\subsection{Notation and conventions} 
For a set $S\subset \R^d$, let $\encl(S)$ denote the smallest enclosing ball, $\vct{c}_S$ its center and $\rho(S)$ its radius. Denote by $\partial \encl(S)$ the boundary and by $\partial S=\partial \encl(S)\cap S$ the points of $S$ on the boundary. Except when otherwise stated, the notation $\log$ will refer to the natural logarithm.

\section{Overview of Persistent Homology}\label{sec:persistent}
In persistent homology one usually begins with data interpreted as a point cloud, associates to it a filtration of simplicial complexes, constructs a boundary matrix related to the simplicial filtration, and then computes the persistence barcodes from a matrix reduction. These barcodes, or equivalently, the persistence diagrams, provide a topological summary of the data. 
We briefly review the part of the theory that is relevant to our purposes. There are many excellent references for the theory presented here, of which we would like to arbitrarily highlight~\cite{carlsson2009topology,edelsbrunner2010computational,oudot2015persistence}, the last of these being a good reference for both the module-theoretic perspective and as a survey of modern techniques and applications. For an overview of state-of-the-art software and a wealth of applications, see~\cite{otter2017roadmap}.

\subsection{Simplicial complexes}
General references for the material in this section are~\cite{munkres1984elements,hatcher2002algebraic} or the relevant chapters in~\cite{edelsbrunner2010computational}.
A simplicial complex $K$ is a finite collection of sets $\sigma$ that is closed under the subset relation. The elements $\sigma\in K$ are called simplices, and a subset $\tau\subseteq \sigma$ (itself a simplex) is called a face of $\sigma$. The {\em dimension} of a simplex is $\dim \sigma = |\sigma|-1$ (in particular, $\dim \emptyset = -1$), and a simplex of dimension $p$ is called a $p$-simplex. We denote by $K_p$ the set of all $p$-simplices. A map of simplicial complexes $f\colon K\to L$ is a map $f\colon K_0\to L_0$ such that $f(\sigma)\in L$ for all $\sigma \in K$. A map $f\colon K\to L$ between simplicial complexes is an isomorphism if it is injective, and $\sigma\in K \Leftrightarrow f(\sigma)\in L$.
A {\em subcomplex} of $K$ is a subset $L\subseteq K$ that is itself a simplicial complex. The $p$-skeleton $\bigcup_{k\leq p}K_k$ of $K$ is the subcomplex consisting of all simplices of dimension at most $p$.
We can associate to each $p$-simplex in $K$ a geometric simplex $\sigma$ in some $\R^d$ (that is, the convex hull of $p+1$ affinely independent points) in such a way that the face relations remain valid and the intersection of two simplices is either empty or again a simplex. The union of these geometric simplices is a {\em geometric realization} $|K|$ of the simplicial complex. A map between simplicial complexes $K\to L$ gives rise to a continuous map $|K|\to |L|$ between topological spaces. An important result in algebraic topology states that isomorphic simplicial complexes give rise to homeomorphic realizations~\cite{munkres1984elements}. In particular, the homotopy type (loosely speaking, the class of shapes that a set can be continuously deformed into) of a simplicial complex is well defined as the homotopy type of a realization of the complex.

Given a set $S\subset \R^d$ and a set of subsets $\mathcal{S}=\{U_i\}_{i\in I}$ such that $S\subseteq \bigcup_{i\in I} U_i$, we define the {\em nerve} $\mathcal{N}(\mathcal{S})$ of $\mathcal{S}$ to be the simplicial complex on the set $I$ defined by 
\begin{equation*}
  \sigma \in \mathcal{N}(\mathcal{S}) \quad \Leftrightarrow \quad \bigcap_{i\in \sigma} U_i \neq \emptyset.
\end{equation*}
The Nerve Theorem~\cite[4.G]{hatcher2002algebraic} asserts that if $\mathcal{S}$ is a finite collection of open, convex sets in $\R^d$, then the nerve $\mathcal{N}(\mathcal{S})$ is homotopy equivalent to the union $\bigcup_{U\in \mathcal{S}} U$. 
The Nerve Theorem also holds for covers with closed balls in Euclidean space~\cite[Chapter 4.3]{oudot2015persistence}, the setting in which it will be used in our case.

\subsection{Homology}
We consider homology over the field $\FF_2$ with two elements.
A chain complex $C_p(K)$ is the $\FF_2$-vector space generated by the $p$-simplices of a simplicial complex $K$. The boundary map $\partial_p$ maps a $p$-simplex to the sum of its $(p-1)$-dimensional faces,

\begin{equation*}
  \partial_p \colon C_p(K)\to C_{p-1}(K), \quad \sigma \mapsto \sum_{\tau\subset \sigma\cap K_{p-1}} \tau.
\end{equation*}

The boundary maps satisfy the fundamental property that for $p\geq 0$, $\partial_p\circ \partial_{p+1}=0$ (here, we use the convention that $\partial_0=0$). If we set $Z_p(K)=\ker \partial_p$ (the set of {\em cycles}) and $B_p(K) = \im \partial_{p+1}$ (the set of {\em boundaries}), then the $p$-th homology vector space is defined as the quotient
\begin{equation*}
  H_p(K) = Z_p(K)/B_p(K).
\end{equation*}
The $p$-th Betti number is $\beta_p = \dim H_p(K)$. Homology is functorial, meaning that a map $f\colon K\to L$ induces a morphism $f_*\colon H_p(K)\to H_p(L)$, with the property that an isomorphism of complexes maps to an isomorphism of homology groups. 

\subsection{Filtrations} 
To capture the topology of the data at different scales, we need to consider sequences of topological spaces and simplicial complexes ordered by inclusion.
Such sequences, called filtrations, give rise to sequences of homology vector spaces with their corresponding induced maps. 

A finite set $X=\{\vct{x}_1,\dots,\vct{x}_n\}\subset \R^d$ induces a filtration $\mathcal{X}=\{X_\alpha\}_{\alpha\in \R}$ of topological spaces, where
\begin{equation*}
  X_\alpha = \bigcup_{1\leq i\leq n} B(\vct{x}_i,\alpha)
\end{equation*}
is the union of the (closed) balls of radius $\alpha$ around the points in $X$ with respect to the Euclidean distance.
One can associate a simplicial filtration to this topological by taking nerve of the collection $\{B(\vct{x}_i,\alpha)\}$. For each $\alpha$, the resulting simplicial complex is the \v{C}ech complex,
\begin{equation*}
  \check{C}_\alpha(X) = \{\sigma \subset [n] \mid \bigcap_{i\in \sigma} B(\vct{x}_i,\alpha)\neq \emptyset\},
\end{equation*}
and as $\alpha$ varies this leads to the \v{C}ech filtration.
Equivalently, a subset $S=\{\vct{x}_i \mid i\in \sigma\}\subseteq X$ gives rise to a simplex $\sigma\in \check{C}_\alpha (X)$ if and only if $S\subset B(\vct{x},\alpha)$ for some $\vct{x}\in \R^d$. We will sometimes omit the brackets and write $\check{C}_\alpha X$, and equally with the other complexes introduced below.

The {\em Delaunay} filtration, or $\alpha$-filtration, is the sequence of simplicial complexes $D_\alpha(X)$ consisting of simplices $\sigma\subset [n]$ such that there exists $\vct{x}\in \R^d$ with \vspace{-0.3cm}
\begin{itemize}
\item $S=\{\vct{x}_i \mid i\in \sigma\} \subset B(\vct{x},\alpha)$;
\item for all $\vct{p}\not\in S$, $\vct{p}\not\in B(\vct{x},\alpha)$.
\end{itemize}
\vspace{-0.3cm}
Clearly, $D_\alpha(X) \subseteq \check{C}_\alpha(X)$. The Delaunay filtration has the advantage that if the points in $X$ are in general position (meaning that no $d+2$ of them lie on the surface of a common sphere), then  the simplices have dimension at most $d$, whereas for the \v{C}ech complex they can have dimension up to $n$. On the other hand, while the complexity of constructing a \v{C}ech complex only depends on the size of the data set (or the distance matrix), constructing a Delaunay complex has complexity exponential in the ambient dimension $d$, which makes it practical only for small dimensions. Finally, one of the most common constructions is the {\em Vietoris-Rips} complex, $V_\alpha(X)$, 
where $\sigma\in V_\alpha(X)$ if and only if for all $\vct{x},\vct{y}\in \sigma$, $\norm{\vct{x}-\vct{y}}\leq 2\alpha$.

A filtration gives rise to a filtration in the homology vector spaces: if $\alpha\leq \alpha'$, then we get a homomorphism
$\iota_{\alpha}^{\alpha'} \colon H_p(X_{\alpha})\to H_p(X_{\alpha'})$.
The $p$-th {\em persistent homology} of a filtration is the induced sequence of homology vector spaces and linear maps, and the $p$-th persistence vector spaces are the images of these homomorphisms,
\begin{equation*}
  H_p^{\alpha,\alpha'}(\mathcal{X}) = \im \iota_{\alpha}^{\alpha'} = Z_p(X_{\alpha})/(B_p(X_{\alpha'})\cap Z_p(X_{\alpha})).
\end{equation*}
Similarly, one can consider the simplicial filtration of \v{C}ech complexes, $\check{C}_\alpha(X)$, and the associated sequence of homology vector spaces. Such sequence of homology vector spaces with the associated maps $\iota_{\alpha}^{\alpha'}$ are examples of {\em persistence modules}, see~\cite{oudot2015persistence} for a more comprehensive treatment. 

The Nerve Theorem guarantees that for each $\alpha$ the homology of the nerve complex is the same as the homology of the cover, but it does not automatically follow that the persistent homology of the filtration induced by the cover is the same as the persistent homology of the resulting filtered simplicial complex. That this is the case is guaranteed by the Persistent Nerve Lemma~\cite[Lemma 4.12]{oudot2015persistence}.

\begin{lemma}(Persistent Nerve Lemma)
Let $X=\{\vct{x}_1,\dots,\vct{x}_n\}$ and $X_\alpha$ the cover with closed balls of radius $\alpha$. Let $\check{C}_\alpha(X)$ be the corresponding nerve. Then there is an isomorphism of persistence homology modules $H_p$.
Specifically, for every $\alpha$ there are isomorphisms $\nu_{\alpha}$ such that the following diagram commutes: 
\begin{equation*}
\begin{tikzcd}
  H_p(X_\alpha) \arrow[d, "\nu_\alpha"'] \arrow[r, "\iota_{\alpha}^{\alpha'}"] & H_p(X_{\alpha'}) \arrow[d, "\nu_{\alpha'}"]\\
  H_p(\check{C}_{\alpha}X) \arrow[r, "\iota_\alpha^{\alpha'}"] & H_p(\check{C}_{\alpha'}X)
\end{tikzcd}
\end{equation*}
\end{lemma}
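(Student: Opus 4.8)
The plan is to upgrade the (closed-ball) Nerve Theorem from a statement at a single scale to one that is natural in $\alpha$. For each fixed $\alpha$, Theorem~\ref{thm:nerve} in its closed-ball version already produces an isomorphism $\nu_\alpha\colon H_p(X_\alpha)\to H_p(\check{C}_\alpha X)$, since the balls $B(\vct{x}_i,\alpha)$ are convex and every nonempty finite intersection of them is again convex, hence contractible. The entire content of the lemma is therefore the commutativity of the square, i.e. that the $\nu_\alpha$ can be chosen compatibly with the inclusion-induced maps $\iota_\alpha^{\alpha'}$. The first observation is that for $\alpha\leq\alpha'$ the balls nest, $B(\vct{x}_i,\alpha)\subseteq B(\vct{x}_i,\alpha')$, so we obtain simultaneously the inclusion of unions $X_\alpha\hookrightarrow X_{\alpha'}$ and the inclusion of nerves $\check{C}_\alpha X\hookrightarrow\check{C}_{\alpha'}X$ (a nonempty intersection at scale $\alpha$ is a fortiori nonempty at scale $\alpha'$). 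The difficulty is that a bare application of the Nerve Theorem at each scale gives no control over how the two families of equivalences relate as $\alpha$ varies.

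To obtain compatible equivalences I would route everything through an intermediate object, the Mayer--Vietoris blow-up complex (the homotopy colimit of the cover) $\mathrm{Bl}(X_\alpha)$, which carries two canonical continuous projections
\[
  X_\alpha \xleftarrow{\ p_\alpha\ } \mathrm{Bl}(X_\alpha) \xrightarrow{\ q_\alpha\ } |\check{C}_\alpha X|.
\]
The key step is to show that both $p_\alpha$ and $q_\alpha$ are homotopy equivalences; this is the ``naturalizable'' form of the Nerve Theorem. The fibres of $p_\alpha$ are realizations of full simplices (the nerve of the sets containing a given point), hence contractible, while the fibres of $q_\alpha$ over the interior of a simplex $\sigma$ are the nonempty intersections $\bigcap_{i\in\sigma}B(\vct{x}_i,\alpha)$, which are convex and hence contractible. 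A gluing argument over the skeleta of the nerve (equivalently a Mayer--Vietoris/partition-of-unity argument) then shows both projections induce isomorphisms on $H_p$, and one sets $\nu_\alpha := (q_\alpha)_*\circ(p_\alpha)_*^{-1}$.

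The payoff of passing through $\mathrm{Bl}$ is that the construction is functorial in the cover. The cover inclusion induced by $\alpha\leq\alpha'$ produces a map $b\colon\mathrm{Bl}(X_\alpha)\to\mathrm{Bl}(X_{\alpha'})$ with $p_{\alpha'}\circ b=\iota\circ p_\alpha$ and $q_{\alpha'}\circ b=\iota\circ q_\alpha$, where $\iota$ denotes the relevant inclusion in each case. Passing to $H_p$ yields a commuting ladder in which all three projection columns are isomorphisms, and a short diagram chase inverting the maps $(p_\bullet)_*$ gives $\nu_{\alpha'}\circ\iota_\alpha^{\alpha'}=\iota_\alpha^{\alpha'}\circ\nu_\alpha$, which is precisely the asserted commutativity. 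Assembling these squares over all pairs $\alpha\leq\alpha'$ (equivalently, over the finitely many critical values identified earlier) then yields the claimed isomorphism of persistence modules.

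The main obstacle is the middle step: establishing that $p_\alpha$ and $q_\alpha$ are homotopy equivalences in a way that is natural in the cover. The outer two steps---nesting of covers and the final diagram chase---are essentially formal, so the proof stands or falls with a clean, functorial proof of the Nerve Theorem via the homotopy colimit, which is exactly the device that forces the single-scale equivalences to commute with the structure maps. This is the route taken in the cited reference~\cite[Lemma 4.12]{oudot2015persistence}.
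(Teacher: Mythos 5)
The paper does not prove this lemma itself---it is stated with a citation to~\cite[Lemma 4.12]{oudot2015persistence}---and your Mayer--Vietoris blow-up (homotopy colimit) argument is precisely the proof given in that reference, going back to Chazal and Oudot: single-scale nerve equivalences via the two projections with contractible fibres, strict functoriality of the blow-up in the cover, and a diagram chase defining $\nu_\alpha=(q_\alpha)_*\circ(p_\alpha)_*^{-1}$. Your proposal is therefore correct and takes essentially the same route as the paper's cited source; the one point requiring care, which you correctly flag, is that the fibrewise-contractibility step must be carried out for the closed-ball version of the Nerve Theorem, exactly the setting the paper notes after Theorem~\ref{thm:nerve}.
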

The collection of maps $\nu_{\alpha}$ constitute a {\em morphism} of persistence modules (in this case, an isomorphism).
The Delaunay complex can equally be related to the persistent homology of the topological filtration $\mathcal{X}$ by noting that the Delaunay complex is a nerve of a cover of the same space. In all the situations of interest to us in this paper, the homology groups are finite-dimensional, and there are finitely many indices $c_0<c_1<\cdots<c_m$, the {\em critical points}, such that $H_p(X_{\alpha})=H_p(X_{\alpha'})$ and $\iota_{\alpha}^{\alpha'}=\mathrm{id}$ for $\alpha,\alpha'\in [c_i,c_{i+1})$.
While the homology of the Vietoris-Rips complex is not as directly related to the topological filtration via the Nerve Lemma, it approximates the \v{C}ech filtration, $\check{C}_\alpha(X) \subseteq V_\alpha(X) \subseteq \check{C}_{\sqrt{2}\alpha}(X)$.

For a simplicial filtration $\mathcal{K}=\{K^{\alpha_i}\}$ such as the \v{C}ech filtration, one defines the persistent homology vector spaces $H_p^{i,j}(\mathcal{K})$ just as in the case of a topological filtration. A simplex $\sigma$ is born at time $\alpha_i$ if appears in $H_p(K^{\alpha_i})$ but is not the image of an element in $H_p(K^{\alpha_{i-1}})$. A simplex $\sigma$ dies 
at time $\alpha_i$ if $\iota_{\alpha_{i-1}}^{\alpha_i}(\sigma)=0$. This way each element in the filtration, which corresponds to a topological feature of the realisation of the simplicial complex, comes with an interval $[\alpha,\alpha')$ representing its lifetime, where $\alpha'$ may be $\infty$. The lifetimes of the various features are recorded in a two-dimensional {\em persistence diagram}, where each interval $[a,b)$ is represented by a point with coordinates $(a,b)$ with multiplicity (which equals $0$ if there is no element whose lifetime matches the interval). Alternatively, one can represent each interval occurring using {\em persistence barcodes}, which record the lifetime of each feature as an interval. 
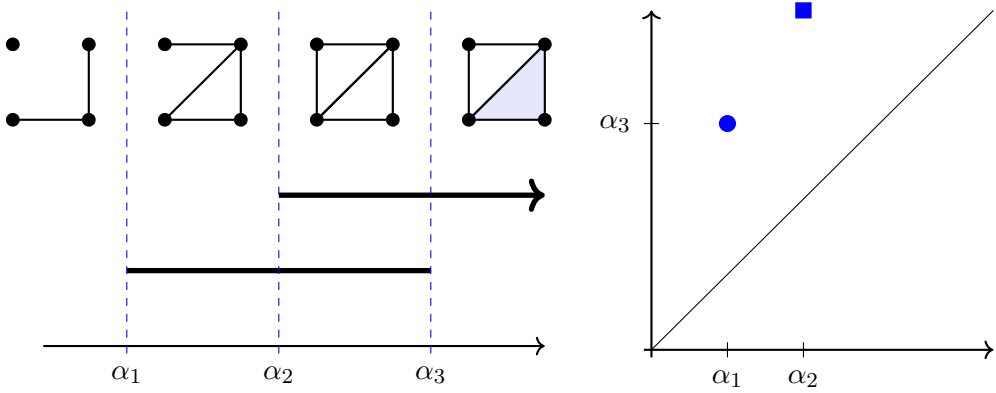
\begin{figure}[H]
\centering
\begin{minipage}{0.45\textwidth}
\begin{tikzpicture}[scale=0.6]
\draw[color=black, thick, ->] (-0.1,0)--(6.5,0);
\draw[color=black, line width=2pt] (1,1)--(5,1);
\draw[color=black, line width=2pt, ->] (3,2)--(6.5,2);
\draw[color=blue,dashed] (1,-0.1)--(1,4.5);
\draw[color=blue,dashed] (3,-0.1)--(3,4.5);
\draw[color=blue,dashed] (5,-0.1)--(5,4.5);
\node (A1) at (1,0)  [label=-90:{$\alpha_1$}] {};
\node (A2) at (3,0)  [label=-90:{$\alpha_2$}] {};
\node (A3) at (5,0)  [label=-90:{$\alpha_3$}] {};
  \draw[color=black, thick] (-0.5,3)--(0.5,3)--(0.5,4);
  \node (1) at (-0.5,3)   [circle, scale=0.5, fill=black] {};
  \node (2) at (-0.5,4)  [circle, scale=0.5, fill=black] {};
  \node (3) at (0.5,3)   [circle, scale=0.5, fill=black] {};
  \node (4) at (0.5,4)   [circle, scale=0.5, fill=black] {};
  
  \draw[color=black, thick] (1.5,3)--(2.5,3)--(2.5,4)--(1.5,3);
  \draw[color=black, thick] (1.5,4)--(2.5,4);
  \node (11) at (1.5,3)   [circle, scale=0.5, fill=black] {};
  \node (21) at (1.5,4)  [circle, scale=0.5, fill=black] {};
  \node (31) at (2.5,3)   [circle, scale=0.5, fill=black] {};
  \node (41) at (2.5,4)   [circle, scale=0.5, fill=black] {};
      
  \draw[color=black, thick] (3.5,3)--(3.5,4)--(4.5,4)--(3.5,3);
  \draw[color=black, thick] (3.5,3)--(4.5,3)--(4.5,4)--(3.5,3);
  \node (12) at (3.5,3)   [circle, scale=0.5, fill=black] {};
  \node (22) at (3.5,4)  [circle, scale=0.5, fill=black] {};
  \node (32) at (4.5,3)   [circle, scale=0.5, fill=black] {};
  \node (42) at (4.5,4)   [circle, scale=0.5, fill=black] {};
  
  \draw[color=black, thick] (5.5,3)--(5.5,4)--(6.5,4)--(5.5,3);
  \draw[color=black, fill=blue!10, thick] (5.5,3)--(6.5,3)--(6.5,4)--(5.5,3);
  \node (11) at (5.5,3)   [circle, scale=0.5, fill=black] {};
  \node (21) at (5.5,4)  [circle, scale=0.5, fill=black] {};
  \node (31) at (6.5,3)   [circle, scale=0.5, fill=black] {};
  \node (41) at (6.5,4)   [circle, scale=0.5, fill=black] {};
\end{tikzpicture}
\end{minipage}
\quad
 \begin{minipage}{0.45\textwidth}
\begin{tikzpicture}[scale=0.6]
\draw[color=black, thick, ->] (0,-0.1)--(0,4.5);
\draw[color=black, thick, ->] (-0.1,0)--(4.5,0);
\draw[color=black] (0,0)--(4.5,4.5);
\draw[color=black] (1,-0.1)--(1,0.1);
\draw[color=black] (2,-0.1)--(2,0.1);
\draw[color=black] (-0.1,3)--(0.1,3);
\filldraw[blue] (1,3) circle (3pt);
\filldraw[blue] (1.9,4.4) rectangle (2.1,4.6);
\node (A1) at (1,0)  [label=-90:{$\alpha_1$}] {};
\node (A2) at (0,3)  [label=-180:{$\alpha_3$}] {};
\node (A3) at (2,0)  [label=-90:{$\alpha_2$}] {};
\end{tikzpicture}
\end{minipage}
\caption{Persistence barcode and diagram for the first Betti number of a simplicial filtration. The circle represents a feature that is born at $\alpha_1$ and dies at $\alpha_3$, while the square represents a feature that is born at $\alpha_2$ and lives on.}
\end{figure}

\subsection{Stability}
In order to measure how changes in the input affect changes in the persistence diagrams, we need to define a notion of distance between persistence diagrams. Here we only discuss a notion of distance between persistence modules, for the relation to other common notions of distance of persistence diagrams we refer to~\cite{chazal2016structure}. In our setting, a {\em persistent module} is the sequence of homology vector spaces associated to a filtration with the maps induced by inclusion.

Let $X,Y$ be two point clouds with corresponding offset filtrations $\{X_\alpha\}$, $\{Y_\alpha\}$, and let $c>0$. 
A {\em multiplicative $c$-interleaving} is then given by a collection of morphisms $\varphi_{\alpha}\colon H_p(X_\alpha)\to H_p(Y_{c\alpha})$ and $\psi_{\alpha}\colon H_p(Y_\alpha)\to H_p(X_{c\alpha})$ such that for each $\alpha$, the following diagrams commute:

\begin{minipage}{0.5\textwidth}
\begin{equation*}
\begin{tikzcd}
  H_p(X_{\alpha/c}) \arrow[rd, "\phi_{\alpha/c}"'] \arrow[rr, "\iota_{\alpha/c}^{c\alpha}"] & & H_p(X_{c\alpha})\\
  & H_p(Y_{\alpha}) \arrow[ru, "\psi_{\alpha}"'] &
\end{tikzcd}
\end{equation*}
\end{minipage}
\begin{minipage}{0.5\textwidth}
\begin{equation*}
\begin{tikzcd}
  & H_p(X_{\alpha}) \arrow[rd, "\phi_{\alpha}"] & \\
  H_p(Y_{\alpha/c}) \arrow[ru, "\psi_{\alpha/c}"] \arrow[rr, "\iota_{\alpha/c}^{c\alpha}"] & & H_p(Y_{c\alpha})
\end{tikzcd}
\end{equation*}
\end{minipage}

The same definition applies to the persistence modules associated to a simplicial filtration. 
The (multiplicative) {\em interleaving distance} between two persistence modules is the smallest $c$ such that a multiplicative $c$-interleaving exists.
One similarly defines an additive interleaving by replacing $\alpha/c$ and $c\alpha$ with $\alpha-\e$ and $\alpha+\e$. A multiplicative interleaving is an additive interleaving on a logarithmic scale.

The following Lemma from~\cite{sheehy2014persistent} reduces the task of finding an interleaving of persistence modules to that of establishing inequalities for smallest enclosing balls. Recall the notation $\rho(S)$ for the radius of a smallest enclosing ball of $S$.

\begin{lemma}\label{lem:sheehy}
Let $X\subset \R^d$ be a finite set and $d_X\colon \R^d \to \R$ the distance function.
Assume we have a function $F\colon X\to \R^m$ such that for all subsets $S\subseteq X$ we have
\begin{equation}\label{eq:encl-ineq}
  (1-\e)\rho(S) \leq \rho(F(S)) \leq (1+\e)\rho(S).
\end{equation}
Then the persistent homology modules associated to the \v{C}ech and Delaunay filtrations of $X$ and $F(X)$ are multiplicatively $(1-\e)^{-1}$-interleaved.
\end{lemma}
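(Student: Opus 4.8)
The plan is to reduce everything to the elementary observation that the \v{C}ech filtration of a point cloud is governed entirely by enclosing-ball radii of subsets, and then to promote approximate equality of radii into an inclusion of filtered complexes. Concretely, a subset $\sigma\subseteq[n]$ is a simplex of $\check{C}_\alpha X$ precisely when $\bigcap_{i\in\sigma}B(\vct{x}_i,\alpha)\neq\emptyset$, which happens if and only if the points $S_\sigma=\{\vct{x}_i\mid i\in\sigma\}$ fit in a common ball of radius $\alpha$, i.e.\ $\rho(S_\sigma)\leq\alpha$. First I would record this reformulation for both $X$ and $F(X)$, so that membership in either \v{C}ech complex is expressed purely through the quantities $\rho(S)$ and $\rho(F(S))$ appearing in hypothesis~\eqref{eq:encl-ineq}.

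Setting $c=(1-\e)^{-1}$, I would then derive the two filtered inclusions that drive the interleaving. If $\rho(S_\sigma)\leq\alpha$, then by the upper bound $\rho(F(S_\sigma))\leq(1+\e)\alpha$; since $(1+\e)(1-\e)=1-\e^2\leq 1$ gives $1+\e\leq c$, this yields $\rho(F(S_\sigma))\leq c\alpha$, so $\check{C}_\alpha X\subseteq\check{C}_{c\alpha}(F(X))$. Conversely, the lower bound $(1-\e)\rho(S_\sigma)\leq\rho(F(S_\sigma))\leq\alpha$ gives $\rho(S_\sigma)\leq c\alpha$, so $\check{C}_\alpha(F(X))\subseteq\check{C}_{c\alpha}X$. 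These are inclusions of abstract simplicial complexes on the common vertex set $[n]$ (the hypothesis applied to $|\sigma|=2$ incidentally forbids $F$ from collapsing distinct points, so no degeneracy arises).

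Applying $H_p$, the inclusions furnish morphisms $\varphi_\alpha\colon H_p(\check{C}_\alpha X)\to H_p(\check{C}_{c\alpha}(F(X)))$ and $\psi_\alpha\colon H_p(\check{C}_\alpha(F(X)))\to H_p(\check{C}_{c\alpha}X)$, i.e.\ morphisms of persistence modules $\varphi\colon\U\to\V[c]$ and $\psi\colon\V\to\U[c]$, where $\U,\V$ are the \v{C}ech persistence modules of $X$ and $F(X)$. Naturality is automatic, since every map in sight is an inclusion on $[n]$ and the relevant squares of inclusions commute on the nose. To verify the interleaving triangles, I would observe that the composite $\psi_\alpha\circ\varphi_{\alpha/c}$ is induced by $\check{C}_{\alpha/c}X\hookrightarrow\check{C}_{\alpha}(F(X))\hookrightarrow\check{C}_{c\alpha}X$, which is literally the inclusion $\check{C}_{\alpha/c}X\hookrightarrow\check{C}_{c\alpha}X$ and hence induces the structure map $u_{\alpha/c}^{c\alpha}$; the symmetric composite behaves identically. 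This establishes the $(1-\e)^{-1}$-interleaving for the \v{C}ech modules.

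The main obstacle is the Delaunay case, since the defining emptiness condition of $D_\alpha$ is not obviously preserved by $F$, so a direct simplicial inclusion between Delaunay complexes of $X$ and $F(X)$ is not available. I would sidestep this by routing through the offset filtration $X_\alpha=\bigcup_i B(\vct{x}_i,\alpha)$: the Persistent Nerve Lemma identifies $H_p(\check{C}_\bullet X)$ with $H_p(X_\bullet)$, while the deformation retraction of $X_\alpha$ onto $|D_\alpha|$ identifies $H_p(D_\bullet X)$ with $H_p(X_\bullet)$ as persistence modules, and likewise for $F(X)$. Since the three modules attached to a given point set are thus canonically isomorphic, and since a multiplicative $c$-interleaving transports across isomorphisms of persistence modules, the interleaving just constructed for the \v{C}ech modules yields the same conclusion for the Delaunay (and offset) modules of $X$ and $F(X)$.
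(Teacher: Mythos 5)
Your proposal is correct and follows essentially the same route as the paper: both reduce \v{C}ech simplex membership to the condition $\rho(S)\leq\alpha$, use hypothesis~\eqref{eq:encl-ineq} with $c=(1-\e)^{-1}$ to build the maps $\varphi_\alpha$ and $\psi_\alpha$, verify the interleaving triangles by noting the composites are the structure maps (the paper phrases this via $S\subseteq F^{-1}(F(S))$, you via literal inclusions on the vertex set $[n]$, which is a cleaner packaging of the same fact), and dispose of the Delaunay case through the standard isomorphism with the offset filtration via the Persistent Nerve Lemma and the deformation retraction. Your explicit remarks that $1+\e\leq(1-\e)^{-1}$ and that the $|\sigma|=2$ case of the hypothesis rules out $F$ collapsing points make precise two details the paper leaves implicit, but the argument is the same.
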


\begin{proof}
We deal with the \v{C}ech-complex, the statement for the Delaunay filtration follows from a standard equivalence~\cite[Chapter 4]{oudot2015persistence}. To simplify notation,
set $c=(1-\e)^{-1}$. 

A set $S$ defines a simplex $\sigma\in \check{C}_\alpha(X)$ if and only if $\rho(S)\leq \alpha$. 
Let $S\subseteq X$ such that $\rho(S)\leq \alpha$, and let $\sigma\in \check{C}_\alpha(X)$ be the associated simplex in the \v{C}ech-complex. Then, by the second inequality in~\eqref{eq:encl-ineq},
\begin{equation*}
\rho(F(S))\leq (1+\e)\rho(S) \leq c\alpha,
\end{equation*}
so that $F$ induces a simplicial map $\check{C}_\alpha(X)\to \check{C}_{c\alpha}(F(X))$. 

Conversely, assume $S\subseteq F(X)$ gives rise to a simplex $\sigma\in \check{C}_{\alpha}(F(X))$. From the first inequality
of~\eqref{eq:encl-ineq} it follows that $F$ is injective, and therefore a bijection of $X$ to $F(X)$.
The inequality 
\begin{equation*}
(1-\e)\rho(S)\leq \rho(F^{-1}(S)) \leq \alpha,
\end{equation*}
then gives rise to a simplicial map $\check{C}_{\alpha}(F(X))\to \check{C}_{c\alpha}(X)$. Note that, by the injectivity of $F$, the composition $\check{C}_{\alpha/c}(X)\to \check{C}_{\alpha}(F(X))\to \check{C}_{c\alpha}(X)$ is the inclusion map and therefore gives rise to a multiplicative $c$-interleaving in the homology.
\end{proof}

\section{General Johnson-Lindenstrauss Transforms}\label{sec:gordon}
The classical Johnson-Lindenstrauss Theorem~\cite[15.2]{Matouvsek2002} shows the existence of a linear map $f\colon \R^d\to \R^m$, such that for all $\vct{x},\vct{y}$ from a finite set $X\subset \R^d$ with $|X|=n$,
\begin{equation*}
  (1-\e)\norm{\vct{x}-\vct{y}}\leq \norm{f(\vct{x})-f(\vct{y})} \leq (1+\e)\norm{\vct{x}-\vct{y}}, 
\end{equation*}
provided $m\geq C\cdot \log(n)/\e^2$ for some constant $C$.
 
This bound is sharp in general~\cite{larsen2014johnson}, but it can be refined based on a certain geometric measure of a set related to $X$. Arguably the most common geometric measure used in this context is the {\em Gaussian width} of a set $T$, defined as
\begin{equation*}
 w(T) = \Expect\sup_{\vct{x}\in T} \ip{\vct{g}}{\vct{x}},
\end{equation*}
where the expectation is taken over a random Gaussian vector in $\R^d$, i.e., $\vct{g}\in \normal(\zerovct,\onemtx)$. One version of the Johnson-Lindenstrauss Theorem can be stated as follows. In what follows we set $E_m := \Expect[\norm{\vct{g}}] = \sqrt{2}\Gamma((m+1)/2)/\Gamma(m/2)$, where $\Gamma(z)=\int_{0}^\infty x^{z-1}e^{-x}\ \diff{x}$ is the gamma function. It is known that
\begin{equation*}
  \frac{m}{\sqrt{m+1}} \leq E_m \leq \sqrt{m},
\end{equation*}
as follows, for example, from~\cite{wendel1948}.

\begin{theorem}\label{thm:j-l}(Johnson-Lindenstrauss - Gordon version)
Let $\delta \in (0,1)$, $X\subset \R^d$, and define $T=\{(\vct{x}-\vct{y})/\norm{\vct{x}-\vct{y}}_2 \mid \vct{x},\vct{y}\in X\}$. Assume that
\begin{equation*}
m \geq \frac{\left(w(T)+\sqrt{2\log(2/\delta)}\right)^2}{\e^2}+1.
\end{equation*} 
Then for a random Gaussian $m\times d$ matrix $\mtx{G}$, with entries $g_{ij}\sim N(0,1/E_m^2)$, we have
\begin{equation*}
  (1-\e)\norm{\vct{x}-\vct{y}} \leq \norm{\vct{G}\vct{x}-\vct{G}\vct{y}} \leq (1+\e)\norm{\vct{x}-\vct{y}},
\end{equation*}
uniformly for all $\vct{x},\vct{y}\in X$ with probability at least $1-\delta$.
\end{theorem}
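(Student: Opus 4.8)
The plan is to reduce the pairwise-distance statement to a uniform two-sided bound on $\norm{\mtx{A}\vct{u}}$ over the unit vectors $\vct{u}\in T$, where $\mtx{A}=E_m\mtx{G}$ has i.i.d.\ $N(0,1)$ entries, and then to control the extreme values of this quantity by combining Gordon's inequality with Gaussian concentration. First I would reduce to $T$: for $\vct{x}\neq\vct{y}$ in $X$, writing $\vct{u}=(\vct{x}-\vct{y})/\norm{\vct{x}-\vct{y}}\in T$ gives
\begin{equation*}
  \norm{\mtx{G}\vct{x}-\mtx{G}\vct{y}} = \norm{\vct{x}-\vct{y}}\cdot\frac{\norm{\mtx{A}\vct{u}}}{E_m}.
\end{equation*}
Hence the claimed bound holds for every pair simultaneously exactly when $1-\e\leq \norm{\mtx{A}\vct{u}}/E_m\leq 1+\e$ for all $\vct{u}\in T$, and it suffices to produce this uniform estimate with probability at least $1-\delta$. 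Note that for a fixed unit $\vct{u}$ the vector $\mtx{A}\vct{u}$ is standard Gaussian in $\R^m$, so $\Expect\norm{\mtx{A}\vct{u}}=E_m$, which is why normalizing by $E_m$ is the natural choice.

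Next I would invoke Gordon's inequality to bound the expected extreme values. Since $T\subseteq S^{d-1}$, Gordon's comparison yields
\begin{equation*}
  \Expect\inf_{\vct{u}\in T}\norm{\mtx{A}\vct{u}} \geq E_m - w(T), \qquad \Expect\sup_{\vct{u}\in T}\norm{\mtx{A}\vct{u}} \leq E_m + w(T).
\end{equation*}
This is where the Gaussian width enters, and I expect it to be the main obstacle. The upper bound on the supremum is the routine Sudakov--Fernique direction, but the lower bound on the infimum genuinely needs Gordon's min--max comparison: one writes $\norm{\mtx{A}\vct{u}}=\max_{\vct{v}\in S^{m-1}}\ip{\mtx{A}\vct{u}}{\vct{v}}$ and compares the Gaussian process $(\vct{u},\vct{v})\mapsto\ip{\mtx{A}\vct{u}}{\vct{v}}$ with $(\vct{u},\vct{v})\mapsto\ip{\vct{g}}{\vct{u}}+\ip{\vct{h}}{\vct{v}}$ for independent standard Gaussians $\vct{g}\in\R^d$, $\vct{h}\in\R^m$, checking the covariance hypotheses of Gordon's theorem. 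I would either cite this inequality directly or spell out the two-process comparison.

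Then I would pass from expectation to high probability via Gaussian concentration. Each of the maps $\mtx{A}\mapsto\inf_{\vct{u}\in T}\norm{\mtx{A}\vct{u}}$ and $\mtx{A}\mapsto\sup_{\vct{u}\in T}\norm{\mtx{A}\vct{u}}$ is $1$-Lipschitz in the Frobenius norm, since $\vct{u}$ ranges over unit vectors and $\bigl|\norm{\mtx{A}\vct{u}}-\norm{\mtx{B}\vct{u}}\bigr|\leq\norm{(\mtx{A}-\mtx{B})\vct{u}}\leq\fnorm{\mtx{A}-\mtx{B}}$. The Gaussian concentration inequality for $1$-Lipschitz functions then gives deviation tails of the form $\econst^{-t^2/2}$ around each mean. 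Taking $t=\sqrt{2\log(2/\delta)}$ makes each of the two tails at most $\delta/2$, so a union bound yields, with probability at least $1-\delta$,
\begin{equation*}
  E_m - w(T) - t \leq \norm{\mtx{A}\vct{u}} \leq E_m + w(T) + t \qquad\text{for all }\vct{u}\in T.
\end{equation*}

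Finally I would close the loop using the dimension hypothesis. Dividing through by $E_m$, the target estimate $1-\e\leq\norm{\mtx{A}\vct{u}}/E_m\leq 1+\e$ follows as soon as $(w(T)+t)/E_m\leq\e$. From the stated lower bound $E_m\geq m/\sqrt{m+1}$ one checks $E_m\geq\sqrt{m-1}$ (because $m^2/(m+1)\geq m-1$), while the assumption $m\geq (w(T)+t)^2/\e^2+1$ gives $\sqrt{m-1}\geq (w(T)+t)/\e$. Hence $E_m\geq(w(T)+t)/\e$, which is exactly the required inequality, and the proof is complete.
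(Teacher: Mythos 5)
Your proposal is correct and follows essentially the same route as the paper: reduce to the normalized difference set $T\subseteq S^{d-1}$, obtain the uniform two-sided bound $E_m - w(T) - t \leq \norm{\mtx{A}\vct{u}} \leq E_m + w(T) + t$ with $t=\sqrt{2\log(2/\delta)}$, and close with the same arithmetic $w(T)+t\leq \e\sqrt{m-1}\leq \e E_m$. The only difference is cosmetic: the paper cites Gordon's theorem directly in tail-bound form (its Theorem~\ref{thm:gordon}), whereas you unpack that black box into its standard proof, namely Gordon's min--max comparison for the expectations plus Gaussian concentration for $1$-Lipschitz functions --- exactly the derivation the paper attributes to the literature.
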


As mentioned after Theorem~\ref{thm:mainA}, one can generalize Theorem~\ref{thm:j-l} with minor loss to subgaussian transformations~\cite{dirksen2016dimensionality}, to the setting of the Sparse Johnson Lindenstrauss Transform (SJLT)~\cite{bourgain2015toward} or more general so-called RIP-matrices \cite{oymak2015isometric, krahmer2011new}, that include, for example, partial Fourier or discrete cosine transforms. We present one such result,
which gives rise to Theorem~\ref{thm:oymak-main} in the same way as Theorem~\ref{thm:j-l} gives rise to Theorem~\ref{thm:mainA}. Recall that a SORS (subsampled orthogonal with ranodm sign) matrix is defined as a matrix of the form $\mtx{A}=\mtx{HD}$, where $\mtx{H}$ is an $m\times d$ matrix arising from uniformly sampling $m$ rows from a unitary matrix with entries bounded by $\Delta/\sqrt{d}$, and $\mtx{D}$ is a diagonal matrix with random i.i.d. sign pattern on the diagonal. 

\begin{theorem}(\cite[Theorem 3.3]{oymak2015isometric})\label{thm:oymak}
Let $\delta \in (0,1)$, $X\subset \R^d$, and define $T=\{(\vct{x}-\vct{y})/\norm{\vct{x}-\vct{y}}_2 \mid \vct{x},\vct{y}\in X\}$. Let 
$\mtx{A}\in \R^{m\times d}$ be a SORS matrix. Then for some constant $C$ and 
\begin{equation*}
  m\geq C \cdot \Delta^2 (1+\log(1/\delta))^2\log^4(d) \frac{w^2(T)}{\e^2},
\end{equation*}
the matrix $\mtx{A}$ satisfies
\begin{equation*}
  (1-\e)\norm{\vct{x}-\vct{y}} \leq \norm{\vct{A}\vct{x}-\vct{A}\vct{y}} \leq (1+\e)\norm{\vct{x}-\vct{y}},
\end{equation*}
uniformly for all $\vct{x},\vct{y}\in X$ with probability at least $1-\delta$.
\end{theorem}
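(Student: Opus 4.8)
The plan is to recognize the statement as a direct instance of the isometric sketching (``transfer'') theorem of Oymak, Recht and Soltanolkotabi~\cite{oymak2015isometric}, applied to the normalized secant set $T$. First I would observe that, since every element of $T$ is a unit vector, the two-sided bound $(1-\e)\norm{\vct{x}-\vct{y}}\leq \norm{\mtx{A}\vct{x}-\mtx{A}\vct{y}}\leq (1+\e)\norm{\vct{x}-\vct{y}}$ holding uniformly over $\vct{x},\vct{y}\in X$ is exactly the statement that $\mtx{A}$ acts as an $\e$-isometry on $T$, namely $\bigl|\,\norm{\mtx{A}\vct{u}}-1\,\bigr|\leq \e$ uniformly for $\vct{u}\in T$; the equivalence follows by dividing through by $\norm{\vct{x}-\vct{y}}$. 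Thus the target dimension is governed entirely by the geometry of $T$, measured through its Gaussian width, exactly as in Theorem~\ref{thm:j-l}.

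The two ingredients are then standard. The first is that a SORS matrix $\mtx{A}=\mtx{HD}$ with $m\gtrsim \Delta^2 s\,\log^4(d)$ rows satisfies the Restricted Isometry Property of order $s$ with a small constant, a classical fact in compressed sensing~\cite{Foucart2013,krahmer2011new}. The second is the transfer theorem of~\cite{oymak2015isometric}: a matrix satisfying the RIP of order $s$, once its columns are modulated by an independent random sign pattern, acts as an $\e$-isometry on an arbitrary set $T\subset S^{d-1}$ provided $s\gtrsim (w^2(T)+\log(1/\delta))/\e^2$. Since the SORS matrix already carries the sign diagonal $\mtx{D}$, the random modulation is built in and both ingredients apply to the same object. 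Composing the two bounds — substituting the required RIP order into the row count — yields a condition on $m$ of the stated form, with the $\log^4(d)$ factor arising from the RIP row count for subsampled orthogonal matrices and the $(1+\log(1/\delta))^2$ factor from tracking the failure probability through the chaining step.

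The substantive content lies entirely in the transfer theorem, which is the step I would expect to be the main obstacle if one were proving it from scratch: turning the RIP — a statement about sparse vectors only — into a near-isometry guarantee for a general set requires a generic chaining argument over $T$ in which the random signs $\mtx{D}$ decouple the interaction between $\mtx{A}$ and the directions in $T$, so that the complexity enters only through $w(T)$ rather than the ambient dimension or cardinality. Having cited this theorem, the remaining work reduces to the bookkeeping of the previous paragraph: matching the RIP order to $w^2(T)/\e^2$, absorbing the overhead factors into the constant $C$, and confirming that the probability $1-\delta$ is precisely what the transfer theorem delivers for this choice of parameters.
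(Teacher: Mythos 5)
Your proposal takes essentially the same route as the paper: Theorem~\ref{thm:oymak} is stated there purely as a citation of Theorem~3.3 of \cite{oymak2015isometric}, with no independent proof given, and your reduction of the two-sided distance bound to an $\e$-isometry on the normalized secant set $T$ is exactly the reading the paper relies on (compare the proof of Theorem~\ref{thm:j-l}). Your added sketch of the internals --- the RIP for subsampled bounded orthonormal systems combined with the sign-modulated transfer theorem of \cite{oymak2015isometric} --- is correct and consistent with the cited source, so there is nothing to flag.
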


Keeping in mind that the results presented here also hold in practically relevant settings, we nevertheless restrict the remaining discussion to the Gaussian case to keep the exposition conceptually simple.

The proof of Theorem~\ref{thm:j-l} is a well known and direct application of Theorem~\ref{thm:gordon}, which follows from an inequality of Gordon~\cite{Gordon1988} relating the expected suprema of Gaussian processes, together with concentration of measure for Lipschitz functions. We include the proof for convenience, an accessible derivation of Gordon's inequality itself can be found in the follow-up to~\cite[Theorem 9.21]{Foucart2013}.

\begin{theorem}(Gordon)\label{thm:gordon}
Let $\mtx{G}$ be an $m\times d$ matrix with standard Gaussian entries $g_{ij}\sim N(0,1)$, and let $T\subseteq S^{d-1}$ be a subset of the unit sphere. Then
\begin{align*}
  \Prob\{\min_{\vct{p}\in T} \norm{\mtx{G}\vct{p}}\leq E_m-w(T)-t\} &\leq e^{-t^2/2}\\
  \Prob\{\max_{\vct{p}\in T} \norm{\mtx{G}\vct{p}}\geq E_m+w(T)+t\} &\leq e^{-t^2/2}
  \end{align*}
\end{theorem}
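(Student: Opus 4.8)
The plan is to reduce each tail bound to two ingredients: a bound on the expectations $\Expect[\min_{\vct{x}\in T}\norm{\mtx{G}\vct{x}}]$ and $\Expect[\max_{\vct{x}\in T}\norm{\mtx{G}\vct{x}}]$ coming from Gaussian comparison inequalities, and a Gaussian concentration estimate controlling the deviation of these random variables from their means. The starting observation is that for a unit vector $\vct{x}\in S^{d-1}$ one has $\norm{\mtx{G}\vct{x}} = \max_{\vct{u}\in S^{m-1}}\ip{\mtx{G}\vct{x}}{\vct{u}}$, so both quantities of interest are extrema of the single centered Gaussian process $X_{\vct{x},\vct{u}} = \ip{\mtx{G}\vct{x}}{\vct{u}}$ indexed by $(\vct{x},\vct{u})\in T\times S^{m-1}$.

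First I would introduce the comparison process $Y_{\vct{x},\vct{u}} = \ip{\vct{g}}{\vct{x}} + \ip{\vct{h}}{\vct{u}}$ with $\vct{g}\in\R^d$, $\vct{h}\in\R^m$ independent standard Gaussian vectors. A direct computation with the covariances, using $\norm{\vct{x}}=\norm{\vct{u}}=1$, gives the single increment inequality $\Expect(X_{\vct{x},\vct{u}}-X_{\vct{x}',\vct{u}'})^2 \leq \Expect(Y_{\vct{x},\vct{u}}-Y_{\vct{x}',\vct{u}'})^2$, valid for all pairs, the gap being exactly $2(1-\ip{\vct{x}}{\vct{x}'})(1-\ip{\vct{u}}{\vct{u}'})\geq 0$, which vanishes precisely when $\vct{x}=\vct{x}'$. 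Applied over the whole index set this is the Sudakov--Fernique hypothesis and yields $\Expect[\max_{\vct{x}\in T}\norm{\mtx{G}\vct{x}}] = \Expect\max_{\vct{x},\vct{u}}X_{\vct{x},\vct{u}} \leq \Expect\max_{\vct{x},\vct{u}}Y_{\vct{x},\vct{u}} = w(T)+E_m$. The fact that the gap vanishes for $\vct{x}=\vct{x}'$ (equality of the inner increments) together with its nonnegativity for $\vct{x}\neq\vct{x}'$ is exactly what Gordon's min--max inequality requires, and it gives $\Expect[\min_{\vct{x}\in T}\norm{\mtx{G}\vct{x}}] = \Expect\min_{\vct{x}}\max_{\vct{u}}X_{\vct{x},\vct{u}} \geq \Expect\min_{\vct{x}}\max_{\vct{u}}Y_{\vct{x},\vct{u}}$. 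Since $\max_{\vct{u}\in S^{m-1}}Y_{\vct{x},\vct{u}} = \ip{\vct{g}}{\vct{x}}+\norm{\vct{h}}$ and $\Expect\min_{\vct{x}\in T}\ip{\vct{g}}{\vct{x}} = -w(T)$ by the symmetry of $\vct{g}$, the latter expectation equals $E_m - w(T)$, so $\Expect[\min_{\vct{x}\in T}\norm{\mtx{G}\vct{x}}]\geq E_m - w(T)$.

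For the concentration step I would regard $\mtx{G}$ as a standard Gaussian vector in $\R^{md}$ under the Frobenius norm and observe that both maps $\mtx{G}\mapsto \min_{\vct{x}\in T}\norm{\mtx{G}\vct{x}}$ and $\mtx{G}\mapsto \max_{\vct{x}\in T}\norm{\mtx{G}\vct{x}}$ are $1$-Lipschitz: for unit $\vct{x}$ one has $|\norm{\mtx{G}\vct{x}}-\norm{\mtx{G}'\vct{x}}|\leq \norm{(\mtx{G}-\mtx{G}')\vct{x}}\leq \fnorm{\mtx{G}-\mtx{G}'}$, and this bound is preserved under taking the minimum or maximum over $\vct{x}\in T$. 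The Gaussian concentration inequality for Lipschitz functions then gives $\Prob\{f(\mtx{G})\leq \Expect f(\mtx{G})-t\}\leq e^{-t^2/2}$ and $\Prob\{f(\mtx{G})\geq \Expect f(\mtx{G})+t\}\leq e^{-t^2/2}$. Replacing $\Expect[\min_{\vct{x}}\norm{\mtx{G}\vct{x}}]$ by the smaller quantity $E_m-w(T)$ in the lower tail, and $\Expect[\max_{\vct{x}}\norm{\mtx{G}\vct{x}}]$ by the larger quantity $E_m+w(T)$ in the upper tail, yields the two stated inequalities.

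The main obstacle is the first step: choosing the comparison process correctly and, above all, keeping track of the directions of the increment inequalities so that Gordon's inequality produces a lower bound on the min--max rather than an upper bound. Once the correct comparison is in place the remaining work is routine, since the Lipschitz constant is exactly $1$ thanks to $T\subseteq S^{d-1}$, and Gordon's comparison inequality itself may be quoted from~\cite{Foucart2013}.
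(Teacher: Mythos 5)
Your proof is correct and follows exactly the route the paper indicates for this theorem, which it does not prove in detail but attributes to Gordon's comparison inequality for expected suprema of Gaussian processes combined with concentration of measure for Lipschitz functions (citing \cite{Gordon1988} and the derivation around \cite[Theorem 9.21]{Foucart2013}). Your covariance computation, the direction of the min--max comparison against the decoupled process $\ip{\vct{g}}{\vct{x}}+\ip{\vct{h}}{\vct{u}}$, and the $1$-Lipschitz Gaussian concentration step are precisely the standard ingredients of that derivation.
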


\begin{proof}[Proof of Theorem~\ref{thm:j-l}]
Set $T=\{(\vct{x}-\vct{y})/\norm{\vct{x}-\vct{y}}_2 \mid \vct{x},\vct{y}\in X\}$. Then the claim is that with probability at least $1-\delta$, for all $\vct{p}\in T$,
\begin{equation*}
  E_m(1-\e)\leq  \norm{\tilde{\mtx{G}}\vct{p}}_2 \leq E_m(1+\e),
\end{equation*}
where we used the fact that $\mtx{G}=\frac{1}{E_m}\tilde{\mtx{G}}$ for a standard Gaussian matrix $\tilde{\mtx{G}}$. By the union bound, it suffices to show that
\begin{align*}
\begin{split}\label{eq:disp1}
 \Prob\{\min_{\vct{p}\in T} \norm{\tilde{\mtx{G}}\vct{p}}\leq E_m(1-\e)\} &\leq \frac{\delta}{2}\\
  \Prob\{\max_{\vct{p}\in T} \norm{\tilde{\mtx{G}}\vct{p}}\geq E_m(1+\e)\} &\leq \frac{\delta}{2}
  \end{split}
\end{align*}
This is where Gordon's Theorem~\ref{thm:gordon} comes into the picture. Set $t = \sqrt{2\log(2/\delta)}$,
so that $\delta/2 = e^{-t^2/2}$. The relation between $m$, $\e$ and $\delta$ in the statement of the theorem can be reformulated as
\begin{equation*}
w(T)+t\leq \e \sqrt{m-1} \leq \e \frac{m}{\sqrt{m+1}}\leq \e E_m,
\end{equation*}
and including the resulting bound into the inequalities in Theorem~\ref{thm:gordon} finishes the proof.
\end{proof}

\section{Enclosing Balls}\label{sec:sphere}
In view of Lemma~\ref{lem:sheehy}, what is needed is a version of the Johnson-Lindenstrauss Theorem involving smallest enclosing balls instead of pairwise distances. Such a result is provided by the following Theorem. It is a version of the Kirszbraun intersection property, which was shown for $|S|\leq d+1$ in~\cite[3.A]{gromov1987monotonicity}, and improves on a similar bound in~\cite{sheehy2014persistent}. 
We present a self-contained proof of Theorem~\ref{thm:equivalence} using elementary properties of the sample variance of a discrete distribution in $\R^d$, followed by a discussion of the relation to Slepian's inequality in Section~\ref{sec:kirszbraun}. Theorem~\ref{thm:equivalence} was derived independently by Sheehy~\cite{sheehy:pc}. Recall the notation $\rho(S)$ for the radius of the smallest enclosing ball of $S$.

\begin{theorem}\label{thm:equivalence}
Let $S\subset \R^d$ be a finite set and let $\e\in [0,1)$.  
Assume that for a map $f\colon \R^d\to \R^m$ and for all $\vct{x},\vct{y}\in S$ we have
\begin{equation}\label{eq:jlbound}
  (1-\e)\norm{\vct{x}-\vct{y}}\leq \norm{f(\vct{x})-f(\vct{y})}\leq (1+\e)\norm{\vct{x}-\vct{y}}.
\end{equation}
Then
\begin{equation}\label{eq:jl-balls}
 (1-\e)\rho(S)\leq \rho(f(S))\leq (1+\e)\rho(S).
\end{equation}
\end{theorem}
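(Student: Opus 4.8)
The plan is to replace the radius of the smallest enclosing ball, which is defined through an (uncontrolled) center, by a quantity expressed purely in pairwise distances, since the hypothesis \eqref{eq:jlbound} only controls pairwise distances. The bridge is the sample variance. Writing $S=\{\vct{x}_1,\dots,\vct{x}_n\}$ and letting $\lambda$ range over probability weights ($\lambda_i\geq 0$, $\sum_i\lambda_i=1$), set $\bar{\vct{x}}_\lambda=\sum_i\lambda_i\vct{x}_i$ and
\[
  \Var_\lambda(S)=\sum_i\lambda_i\normsq{\vct{x}_i-\bar{\vct{x}}_\lambda}=\frac{1}{2}\sum_{i,j}\lambda_i\lambda_j\normsq{\vct{x}_i-\vct{x}_j},
\]
the second identity being the elementary expansion of the variance in pairwise distances. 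The key claim I would isolate as a lemma is the variational formula $\rho(S)^2=\max_\lambda\Var_\lambda(S)$, with the maximum over the (compact) simplex of weights.

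To prove this formula I would first record the completion-of-squares identity $\Var_\lambda(S)=\min_{\vct{c}}\sum_i\lambda_i\normsq{\vct{x}_i-\vct{c}}$, attained at $\vct{c}=\bar{\vct{x}}_\lambda$. The easy inequality $\max_\lambda\Var_\lambda(S)\leq\rho(S)^2$ then follows by evaluating at the center $\vct{c}_S$ of the smallest enclosing ball: since each point lies within $\rho(S)$ of $\vct{c}_S$,
\[
  \Var_\lambda(S)\leq\sum_i\lambda_i\normsq{\vct{x}_i-\vct{c}_S}\leq\rho(S)^2.
\]
For the reverse inequality I would invoke the standard property of smallest enclosing balls that $\vct{c}_S\in\conv(\partial S)$. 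Writing $\vct{c}_S=\sum_i\lambda_i\vct{x}_i$ with $\lambda$ supported on $\partial S$ exhibits a distribution whose barycenter is $\vct{c}_S$, so that $\Var_\lambda(S)=\sum_{i\in\partial S}\lambda_i\normsq{\vct{x}_i-\vct{c}_S}=\rho(S)^2$, because each boundary point is at distance exactly $\rho(S)$ from $\vct{c}_S$. This attains the maximum and proves the formula. (Alternatively one can swap the order in $\rho(S)^2=\min_{\vct{c}}\max_\lambda\sum_i\lambda_i\normsq{\vct{x}_i-\vct{c}}$ via a minimax theorem, the integrand being convex in $\vct{c}$ and linear in $\lambda$.)

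Granting the formula, the theorem follows quickly. For the upper bound, let $\lambda$ attain $\rho(f(S))^2=\max_\lambda\Var_\lambda(f(S))$ and apply the right-hand bound of \eqref{eq:jlbound} termwise in the pairwise-distance expansion of the variance:
\[
  \rho(f(S))^2=\Var_\lambda(f(S))\leq(1+\e)^2\Var_\lambda(S)\leq(1+\e)^2\rho(S)^2,
\]
the last step because a single $\lambda$ cannot exceed $\max_\mu\Var_\mu(S)=\rho(S)^2$. Taking square roots gives $\rho(f(S))\leq(1+\e)\rho(S)$. The lower bound is entirely symmetric: taking $\mu$ that attains $\rho(S)^2=\max_\mu\Var_\mu(S)$ and using the left-hand bound of \eqref{eq:jlbound} yields $(1-\e)^2\rho(S)^2\leq\Var_\mu(f(S))\leq\rho(f(S))^2$.

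The main obstacle is the nontrivial direction of the variational formula, namely $\rho(S)^2\leq\max_\lambda\Var_\lambda(S)$; everything else is bookkeeping with the pairwise-distance expansion. In the self-contained route this amounts to the fact that $\vct{c}_S\in\conv(\partial S)$, which I would justify by noting that otherwise the center could be perturbed toward $\conv(\partial S)$ so as to strictly decrease the enclosing radius, contradicting the minimality that defines $\rho(S)$.
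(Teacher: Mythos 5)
Your proposal is correct and follows essentially the same route as the paper: the pairwise-distance expansion of the weighted variance (Lemma~\ref{le:ball}), the fact that $\vct{c}_S\in\conv(\partial S)$, and the minimality of $\vct{c}\mapsto\sum_i\lambda_i\normsq{f(\vct{x}_i)-\vct{c}}$ at the barycenter are exactly the ingredients of the paper's argument, which you merely repackage as the variational formula $\rho(S)^2=\max_\lambda\Var_\lambda(S)$. A minor point in your favour: you explicitly justify the stronger statement $\vct{c}_S\in\conv(\partial S)$ via the perturbation argument, whereas the paper's Lemma~\ref{le:conv} only records $\vct{c}_S\in\conv(S)$ and the boundary-supported convex combination is used implicitly in the proof.
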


\begin{remark}
The literature on Johnson-Lindenstrauss is not always consistent on whether to use norms or squared norms, which leads to some ambiguity with respect to $\e$ and $\e^2$. We note that if we had used squared norms in the assumptions of Theorem~\ref{thm:equivalence}, we would also get the same result.
\end{remark}

\begin{remark}
For simplicity, Theorem~\ref{thm:equivalence} is stated for the Euclidean distance, but the proof of Theorem~\ref{thm:equivalence} can be extended to the case of the {\em power distance}~\cite{sheehy2014persistent,buchet2016efficient}. To define this distance, assign non-negative weights $w(\vct{x})$ to points in $X$ and $f(X)$. Then the power distance from $\vct{x}$ to a weighted point $\vct{y}$ is defined as
\begin{equation*}
  d_{\vct{y}}(\vct{x})^2 = \norm{\vct{x}-\vct{y}}^2+w(\vct{y})^2. 
\end{equation*}
The radius of a smallest enclosing ball is then defined as
\begin{equation*}
  \rho_w(S) = \min_{\vct{p}\in \R^d} \max_{\vct{x}\in X} d_{\vct{x}}(\vct{p}),
\end{equation*}
with the minimizing $\vct{p}$ as center $\vct{c}_S$. One easily checks that Lemma~\ref{le:conv} extends to this setting, and
assuming that $w(\vct{x})=w(f(\vct{x}))$, the proof of Theorem~\ref{thm:equivalence} carries over.
\end{remark}

To prepare for the proof of Theorem~\ref{thm:equivalence} we first need a few elementary auxiliary results. Lemma~\ref{le:conv} appears to be folklore.
For a set $S$, the {\em center} $\vct{c}_S$ of $S$ is the center of the smallest including ball.

\begin{lemma}\label{le:conv}
Let $S$ be a set and let $\vct{c}_S$ denote the center of $S$. Then $\vct{c}_S\in \conv(S)$. 
\end{lemma}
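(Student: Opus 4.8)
The plan is to argue by contradiction, exploiting the characterization of the smallest enclosing ball center as the unique minimizer of a convex function. The key observation is that $\vct{c}_S$ is the point $\vct{p}$ minimizing the function $g(\vct{p}) = \max_{\vct{x}\in S}\norm{\vct{x}-\vct{p}}$, and the squared version $\vct{p}\mapsto \max_{\vct{x}\in S}\norm{\vct{x}-\vct{p}}^2$ is strictly convex in $\vct{p}$, so this minimizer is unique. First I would suppose for contradiction that $\vct{c}_S\notin \conv(S)$. Since $\conv(S)$ is a nonempty closed convex set, there is a unique nearest point $\vct{q}\in\conv(S)$ to $\vct{c}_S$, and the separating hyperplane through $\vct{q}$ orthogonal to $\vct{c}_S-\vct{q}$ separates $\vct{c}_S$ from all of $\conv(S)$, hence from all points of $S$.

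The heart of the argument is to show that moving the center from $\vct{c}_S$ toward $\vct{q}$ strictly decreases the distance to every point of $S$, contradicting the minimality of $\vct{c}_S$. Concretely, I would write $\vct{u} = (\vct{q}-\vct{c}_S)/\norm{\vct{q}-\vct{c}_S}$ and consider the perturbed centers $\vct{c}_S + t\vct{u}$ for small $t>0$. For any $\vct{x}\in S$, expand
\begin{equation*}
  \norm{\vct{x}-(\vct{c}_S+t\vct{u})}^2 = \norm{\vct{x}-\vct{c}_S}^2 - 2t\ip{\vct{u}}{\vct{x}-\vct{c}_S} + t^2.
\end{equation*}
The separation property gives that the hyperplane through $\vct{q}$ has $\vct{c}_S$ strictly on one side and all of $S$ on the other, which translates into the inequality $\ip{\vct{u}}{\vct{x}-\vct{c}_S}\geq \norm{\vct{q}-\vct{c}_S} > 0$ for every $\vct{x}\in S$. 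Hence for $t$ small enough the linear term dominates the quadratic one, and $\norm{\vct{x}-(\vct{c}_S+t\vct{u})} < \norm{\vct{x}-\vct{c}_S}\leq \rho(S)$ for all $\vct{x}\in S$ simultaneously. This shows the enclosing radius at the perturbed center is strictly smaller than $\rho(S)$, contradicting that $\vct{c}_S$ achieves the minimal enclosing radius.

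The main obstacle I anticipate is making the separation estimate $\ip{\vct{u}}{\vct{x}-\vct{c}_S}\geq\norm{\vct{q}-\vct{c}_S}$ fully rigorous and uniform over all $\vct{x}\in S$. This requires invoking the correct form of the separating hyperplane theorem: since $\vct{q}$ is the metric projection of $\vct{c}_S$ onto the convex set $\conv(S)$, the standard variational characterization of projections gives $\ip{\vct{c}_S-\vct{q}}{\vct{z}-\vct{q}}\leq 0$ for all $\vct{z}\in\conv(S)$, and taking $\vct{z}=\vct{x}\in S\subseteq\conv(S)$ then rearranging yields exactly the uniform lower bound needed. Because $S$ is finite, uniformity over $\vct{x}\in S$ is automatic once the pointwise bound holds, so a single choice of $t>0$ works for all points at once, and no compactness subtlety arises. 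An alternative, perhaps cleaner, route would be to use the strict convexity of $g^2$ directly: if $\vct{c}_S\notin\conv(S)$ then its projection $\vct{q}$ satisfies $\norm{\vct{x}-\vct{q}}^2 \leq \norm{\vct{x}-\vct{c}_S}^2 - \norm{\vct{c}_S-\vct{q}}^2$ for every $\vct{x}\in S$ by the Pythagorean inequality for projections onto convex sets, which immediately gives $\rho(\conv(S)) \leq \rho(S)$ with strict loss, again contradicting the definition of $\vct{c}_S$ as the center.
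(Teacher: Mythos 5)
Your proof is correct and takes essentially the same approach as the paper: both argue by contradiction, project $\vct{c}_S$ onto $\conv(S)$, and use the variational inequality $\ip{\vct{c}_S-\vct{q}}{\vct{x}-\vct{q}}\leq 0$ for the projection to conclude that the projected point is a strictly better center. Indeed, the ``alternative, cleaner route'' you sketch at the end --- the Pythagorean inequality $\norm{\vct{x}-\vct{q}}^2\leq\norm{\vct{x}-\vct{c}_S}^2-\norm{\vct{c}_S-\vct{q}}^2$ --- is precisely the paper's argument, of which your small-$t$ perturbation of the center is just a slightly longer variant.
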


\begin{proof}
Assume $\vct{c}_S\not\in \conv(S)$ and 
denote by $\Proj(\vct{c}_S)=\argmin_{\vct{x}\in \conv(S)} \norm{\vct{c}_S-\vct{x}}$ the projection of $\vct{c}_S$ onto $\conv(S)$. We show that any point in $S$ is closer to $\Proj(\vct{c}_S)$ than to $\vct{c}_S$. In fact, for any $\vct{p}\in S$ we get
\begin{align*}
  \norm{\vct{c}_S-\vct{p}}^2 &= \norm{\vct{c}_S-\Proj(\vct{c}_S)+\Proj(\vct{c}_S)-\vct{p}}^2\\
  & = \norm{\vct{c}_P-\Proj(\vct{c}_S)}^2+\norm{\Proj(\vct{c}_S)-\vct{p}}^2-2\ip{\vct{c}_S-\Proj(\vct{c}_S)}{\vct{p}-\Proj(\vct{c}_S)}\\
  & \geq \norm{\Proj(\vct{c}_S)-\vct{p}}^2,
\end{align*}
where for the inequality we used the separating hyperplane theorem, which implies that the inner product in the expression is non-positive.
\end{proof}

The following elementary observation is just the expression of the sample variance of a point set in terms of pairwise distances.

\begin{lemma}\label{le:ball}
Let $\vct{c}=\sum_{i=1}^k \lambda_i \vct{x}_i$ be a convex combination of elements of $S$. Then
\begin{equation}\label{eq:firstconvex}
 \sum_{i=1}^k \lambda_i \norm{\vct{x}_i-\vct{c}}^2 = \sum_{i<j} \lambda_i\lambda_j \norm{\vct{x}_j-\vct{x}_i}^2.
\end{equation}
\end{lemma}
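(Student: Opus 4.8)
The plan is to reduce both sides of~\eqref{eq:firstconvex} to one and the same expression, namely $\sum_{i} \lambda_i \norm{\vct{x}_i}^2 - \norm{\vct{c}}^2$, by expanding the squared norms and invoking only the two defining properties of a convex combination: $\sum_i \lambda_i = 1$ and $\vct{c} = \sum_i \lambda_i \vct{x}_i$. Non-negativity of the $\lambda_i$ is not actually used, so the identity is really a statement about arbitrary affine combinations.

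First I would handle the left-hand side. Expanding $\norm{\vct{x}_i-\vct{c}}^2 = \norm{\vct{x}_i}^2 - 2\ip{\vct{x}_i}{\vct{c}} + \norm{\vct{c}}^2$ and summing against the weights, the middle term collapses because $\sum_i \lambda_i \vct{x}_i = \vct{c}$, contributing $-2\ip{\vct{c}}{\vct{c}}$, while the last term gives $\norm{\vct{c}}^2 \sum_i \lambda_i = \norm{\vct{c}}^2$. Hence the left side equals $\sum_i \lambda_i \norm{\vct{x}_i}^2 - \norm{\vct{c}}^2$.

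For the right-hand side the cleanest route is to symmetrize. Since the summand $\lambda_i \lambda_j \norm{\vct{x}_j - \vct{x}_i}^2$ is symmetric in $i,j$ and vanishes on the diagonal, one has $\sum_{i<j} \lambda_i \lambda_j \norm{\vct{x}_j - \vct{x}_i}^2 = \tfrac{1}{2}\sum_{i,j} \lambda_i \lambda_j \norm{\vct{x}_j - \vct{x}_i}^2$. Expanding the square and distributing the double sum, each of the two ``diagonal-norm'' terms factors as $\bigl(\sum_i \lambda_i \norm{\vct{x}_i}^2\bigr)\bigl(\sum_j \lambda_j\bigr) = \sum_i \lambda_i \norm{\vct{x}_i}^2$, while the cross term assembles into $\ip{\sum_i \lambda_i \vct{x}_i}{\sum_j \lambda_j \vct{x}_j} = \norm{\vct{c}}^2$. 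After the factor $\tfrac{1}{2}$ this again yields $\sum_i \lambda_i \norm{\vct{x}_i}^2 - \norm{\vct{c}}^2$, matching the left side and closing the proof.

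There is no genuine obstacle here; the content is simply that the weighted sample variance about the barycentre equals the weighted average of squared pairwise distances (up to the factor $\tfrac{1}{2}$). The only point requiring care is the bookkeeping in passing from $\sum_{i<j}$ to $\tfrac{1}{2}\sum_{i,j}$ and tracking which of the sums over the $\lambda$'s evaluate to $1$; both manipulations hinge entirely on $\sum_i \lambda_i = 1$, so I would make sure that normalization is stated explicitly at the outset.
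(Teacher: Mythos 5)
Your proof is correct, and it takes a recognizably different route from the paper's. You reduce both sides to the common expression $\sum_i \lambda_i \norm{\vct{x}_i}^2 - \norm{\vct{c}}^2$ by expanding squared norms in terms of the absolute positions $\vct{x}_i$ and using the symmetrization $\sum_{i<j} = \tfrac{1}{2}\sum_{i,j}$ --- the classical ``variance equals mean of squares minus square of the mean'' computation. The paper instead never introduces $\norm{\vct{x}_i}^2$ or $\norm{\vct{c}}^2$ in isolation: it writes $\norm{\vct{x}_j-\vct{c}}^2 = \sum_i \lambda_i \ip{\vct{x}_j-\vct{c}}{\vct{x}_j-\vct{x}_i}$, expands each inner product via the polarization (law-of-cosines) identity $\ip{\vct{x}_j-\vct{c}}{\vct{x}_j-\vct{x}_i} = \tfrac{1}{2}(\norm{\vct{x}_j-\vct{c}}^2 + \norm{\vct{x}_j-\vct{x}_i}^2 - \norm{\vct{x}_i-\vct{c}}^2)$, and then rearranges and sums against $\lambda_j$, so that only translation-invariant quantities (differences of points) ever appear. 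Your version is arguably more transparent and makes explicit two observations the paper leaves implicit: that only $\sum_i \lambda_i = 1$ is used (so the identity holds for affine combinations, non-negativity being irrelevant), and that the double-counting in passing between $\sum_{i<j}$ and $\tfrac{1}{2}\sum_{i,j}$ is the only bookkeeping hazard; the paper's version buys manifest translation invariance at the cost of a slightly more intricate self-referential rearrangement. Both arguments are complete and correct.
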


\begin{proof}
Using the representation of $\vct{c}$ as convex combination of the $\vct{x}_i$, we get
\begin{equation}\label{eq:first}
 \norm{\vct{x}_j-\vct{c}}^2 = \ip{\vct{x}_j-\vct{c}}{\vct{x}_j-\sum_{i=1}^k \lambda_i\vct{x}_i} = \sum_{i=1}^k \lambda_i\ip{\vct{x}_j-\vct{c}}{\vct{x}_j-\vct{x}_i}.
\end{equation}
Each summand can be characterized as
\begin{equation*}
  \ip{\vct{x}_j-\vct{c}}{\vct{x}_j-\vct{x}_i} = \frac{1}{2}\left(\norm{\vct{x}_j-\vct{c}}^2+\norm{\vct{x}_j-\vct{x}_i}^2-\norm{\vct{x}_i-\vct{c}}^2\right).
\end{equation*}
Plugging this identity into~\eqref{eq:first}, using $\sum_{i=1}^k\lambda_i=1$, and combining all terms involving $\norm{\vct{x}_j-\vct{c}}^2$,
\begin{equation*}
\norm{\vct{x}_j-\vct{c}}^2 = \sum_{i=1}^k \lambda_i\norm{\vct{x}_j-\vct{x}_i}^2-\sum_{i=1}^k \lambda_i \norm{\vct{x}_i-\vct{c}}^2.
\end{equation*} 
Rearranging and summing both sides with weights $\lambda_j$ establishes the claim.
\end{proof}

\begin{proof}[Proof of Theorem~\ref{thm:equivalence}]
Let $\partial S = \{\vct{x}_1,\dots,\vct{x}_k\}$ and let $\vct{c}_S=\sum_{i=1}^k \lambda_i \vct{x}_i$ be representation of the center of $S$ as a convex combination of the elements of $\partial S$. Set $\tilde{\vct{c}}=\sum_{i=1}^k \lambda_i f(\vct{x}_i)$. Applying Lemma~\ref{le:ball} twice, we get
\begin{align*}
\rho^2(S) = \sum_{i=1}^k \lambda_i\norm{\vct{x}_i-\vct{c}_S}^2 &\stackrel{\eqref{eq:firstconvex}}{=} \sum_{i<j}\lambda_i\lambda_j \norm{\vct{x}_j-\vct{x}_i}^2\\
&\leq \frac{1}{(1-\e)^2}\left(\sum_{i<j}\lambda_i\lambda_j \norm{f(\vct{x}_j)-f(\vct{x}_i)}^2\right)\\
&\stackrel{\eqref{eq:firstconvex}}{=} \frac{1}{(1-\e)^2}\left(\sum_{i=1}^k \lambda_i\norm{f(\vct{x}_i)-\tilde{\vct{c}}}^2\right).
\end{align*}
As the function $\vct{c}\mapsto \sum_{i=1}^k \lambda_i\norm{f(\vct{x}_i)-\vct{c}}^2$
is minimized at $\tilde{\vct{c}}$, we can continue the above and conclude
\begin{equation*}
\rho^2(S)\leq \frac{1}{(1-\e)^2}\left(\sum_{i=1}^k \lambda_i\norm{f(\vct{x}_i)-\vct{c}_{f(S)}}^2\right)
\leq \frac{1}{(1-\e)^2}\rho^2(f(S)).
\end{equation*}
For the right-hand inequality we proceed similarly. 
\end{proof}

\begin{proof}[Proof of Theorem~\ref{thm:mainA}]
The Johnson-Lindenstrauss Theorem~\ref{thm:j-l} states that under the assumptions of Theorem~\ref{thm:mainA}, the pairwise distances are preserved up to multiplicative factors of $1\pm \e$. By Theorem~\ref{thm:equivalence}, the smallest enclosing balls are preserved up to the same factors. Finally, Lemma~\ref{lem:sheehy} asserts that this
gives rise to the desired interleaving of persistent modules. This completes the proof.
\end{proof}

\subsection{Slepian's Lemma and the Kirszbraun intersection property}\label{sec:kirszbraun}
The Kirszbraun intersection property~\cite[3.A]{gromov1987monotonicity} states that, given sets of distinct points $X=\{\vct{x}_1,\dots,\vct{x}_k\}$ and $Y=\{\vct{y}_1,\dots,\vct{y}_k\}$ in $\R^d$ with $k\leq d+1$ and 
\begin{equation}\label{eq:pairwise-ineq}
  \norm{\vct{x}_i-\vct{x}_j} \geq \norm{\vct{y}_i-\vct{y}_j}
\end{equation}
for $1\leq i<j\leq k$, then 
\begin{equation}\label{eq:intersection}
  \bigcap_{\vct{x}\in X} B_r(\vct{x}) \neq \emptyset \Rightarrow \bigcap_{\vct{y}\in Y} B_{r}(\vct{y}) \neq \emptyset,
\end{equation}
where $B_r(\vct{x})$ denotes the closed ball of radius $r$ around $\vct{x}$.
This, in turn, is equivalent to $\rho(Y)\leq \rho(X)$, and applied to $\vct{y}_i=f(\vct{x}_i)/(1+\e)$ implies the upper bound in Theorem~\ref{thm:equivalence}. The lower bound follows similarly.
The proof of Theorem~\ref{thm:equivalence} can therefore be seen as an alternative (and more direct) derivation of this intersection property without any restriction on $k$. The Kirszbraun intersection property has been used in~\cite{bauer2017dynamical} to study sampled dynamical systems.

The proof of the Kirszbraun intersection property given in~\cite{gromov1987monotonicity} is of independent interest, as it is based on the intuitive observation that the volume of the intersection of balls centered at a finite set of points increases as the points move closer together (see also~\cite{gordon1995volume} for variations on this theme). This observation also suggests a connection to Slepian's Lemma from the theory of Gaussian processes~\cite{Ledoux1991}. In fact, Theorem~\ref{thm:equivalence} can be derived from Slepian's Lemma, as we show next.
One version of Slepian's Lemma,
as stated in~\cite{M:11} and~\cite[Appendix B]{effective}, is as follows.

\begin{theorem}[Slepian Inequality]\label{thm:slepian}
Let $X_1,\dots,X_k$ and $Y_1,\dots,Y_k$ be centered Gaussian random variables such that
\begin{equation*}
  \Expect[(X_i-X_j)^2] \geq \Expect[(Y_i-Y_j)^2]
\end{equation*}
for $1\leq i<j\leq k$. Let $f\colon \R\to \R$ be a monotonically non-decreasing function. Then
\begin{equation*}
  \Expect[\max_{1\leq i\leq k} f(X_i)] \geq \Expect[\max_{1\leq i\leq k} f(Y_i)].
\end{equation*}
\end{theorem}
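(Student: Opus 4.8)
The plan is to prove the inequality by the Gaussian interpolation (``smart path'') method. Since $f$ is nondecreasing, one has $\max_i f(X_i) = f(\max_i X_i)$, so the assertion is equivalent to $\Expect[f(\max_i X_i)] \ge \Expect[f(\max_i Y_i)]$. Writing an increasing $f$ through the layer-cake formula as a nonnegative superposition of indicators $\mathbf{1}_{(\lambda,\infty)}$, it suffices to establish the one-sided distributional comparison $\Prob\{\max_i X_i \le \lambda\} \le \Prob\{\max_i Y_i \le \lambda\}$ for every $\lambda$. Thus I would reduce everything to comparing $\Expect[F(X)]$ with $\Expect[F(Y)]$ for a suitable smooth, coordinatewise-monotone $F$: either the soft-max $F_\beta(z) = \beta^{-1}\log\sum_i e^{\beta z_i}$ (for the expectation, i.e.\ the $f=\mathrm{id}$ case) or a product $F(z)=\prod_i h(z_i)$ with $h$ a smooth decreasing approximation of $\mathbf{1}_{(-\infty,\lambda]}$ (to reach the distribution-function comparison and hence all increasing $f$).

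First I would realize $X$ and $Y$ as independent centered Gaussian vectors on one probability space and interpolate along $Z(t) = \sqrt{t}\,X + \sqrt{1-t}\,Y$ for $t\in[0,1]$, so that $Z(0)=Y$ and $Z(1)=X$. For smooth bounded $F$ set $\phi(t) = \Expect[F(Z(t))]$. Differentiating gives $\phi'(t) = \sum_i \Expect[\partial_i F(Z(t))\,(\tfrac{X_i}{2\sqrt{t}} - \tfrac{Y_i}{2\sqrt{1-t}})]$, and applying the Gaussian integration-by-parts identity $\Expect[G\,\psi(H)] = \sum_j \Expect[GH_j]\,\Expect[\partial_j\psi(H)]$ together with the independence of $X$ and $Y$ collapses this to
\[
  \phi'(t) = \tfrac12 \sum_{i,j}\big(\Expect[X_iX_j] - \Expect[Y_iY_j]\big)\,\Expect[\partial_i\partial_j F(Z(t))].
\]
The half-derivatives of the path produce exactly the covariance difference $\Sigma^X - \Sigma^Y$ as the weight on the Hessian of $F$, which is the whole point of the construction.

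The crux is then to control the sign of $\phi'(t)$. For the soft-max the Hessian is $\partial_i\partial_j F_\beta = \beta(p_i\delta_{ij} - p_ip_j)$ with weights $p_i\ge 0$ and $\sum_i p_i = 1$, and using $\sum_i p_i = 1$ I would verify the key algebraic identity
\[
  \sum_{i,j}\big(\Sigma^X_{ij}-\Sigma^Y_{ij}\big)\big(p_i\delta_{ij}-p_ip_j\big)
  = \tfrac12 \sum_{i,j}\big(\Expect[(X_i-X_j)^2] - \Expect[(Y_i-Y_j)^2]\big)\,p_ip_j,
\]
whose right-hand side is nonnegative by hypothesis; hence $\phi'(t)\ge 0$, giving $\Expect[\max_i X_i] \ge \Expect[\max_i Y_i]$. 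For a general increasing $f$ I would instead use the product $F$, for which the off-diagonal mixed partials $\partial_i\partial_j F = h'(z_i)h'(z_j)\prod_{l\ne i,j}h(z_l)$ are nonnegative, while the diagonal terms of $\phi'(t)$ carry the weight $\Sigma^X_{ii}-\Sigma^Y_{ii}$ and vanish once the individual variances of the two families agree; the surviving off-diagonal weight $\Sigma^X_{ij}-\Sigma^Y_{ij}\le 0$ is precisely the difference-of-squares hypothesis, so $\phi'(t)\le 0$ and $\Prob\{\max_i X_i\le\lambda\}\le\Prob\{\max_i Y_i\le\lambda\}$, which upgrades to the claim for all increasing $f$. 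I expect the main obstacle to be exactly this diagonal cancellation: the difference hypothesis alone controls only the off-diagonal covariances, so the matching of the individual variances is what makes the diagonal contributions drop out and gives the comparison for every monotone $f$ rather than merely for expectations. A secondary, more routine obstacle is the analytic justification of the Gaussian integration by parts and of the smoothing limits ($h\to\mathbf{1}$ and $\beta\to\infty$), where one must supply uniform derivative and integrability bounds on $F$.
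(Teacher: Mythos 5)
The paper states Theorem~\ref{thm:slepian} without proof, importing it from the cited references, so there is no in-paper argument to compare against; your interpolation proof is the standard route found in those sources, and your computations are correct as far as they go. The derivative formula $\phi'(t)=\tfrac12\sum_{i,j}(\Sigma^X_{ij}-\Sigma^Y_{ij})\Expect[\partial_i\partial_j F(Z(t))]$ is right, the soft-max Hessian identity checks out (expanding $\Expect[(X_i-X_j)^2]-\Expect[(Y_i-Y_j)^2]=A_{ii}+A_{jj}-2A_{ij}$ with $A=\Sigma^X-\Sigma^Y$ and using $\sum_i p_i=1$ reproduces exactly $\sum_i A_{ii}p_i-\sum_{i,j}A_{ij}p_ip_j$), and the sign analysis for the product function $F(z)=\prod_i h(z_i)$ is correct. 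So the soft-max half of your argument is a complete and correct proof of the Sudakov--Fernique conclusion $\Expect[\max_i X_i]\geq \Expect[\max_i Y_i]$ from the stated hypothesis alone.

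The gap is the one you yourself flagged, and it cannot be repaired: for general increasing $f$ your product-function step needs the diagonal weights $\Sigma^X_{ii}-\Sigma^Y_{ii}$ to vanish, i.e.\ $\Expect[X_i^2]=\Expect[Y_i^2]$, and this is \emph{not} among the hypotheses of the theorem as printed. No alternative argument can supply it, because without equal variances the statement is false. Take $k=2$, $X_1,X_2$ i.i.d.\ $N(0,\eps^2)$ and $Y_1=W+\delta V$, $Y_2=W-\delta V$ with $W\sim N(0,1)$, $V\sim N(0,1)$ independent and $4\delta^2\leq 2\eps^2$; the difference condition holds, yet with $f=\onemtx\{x>1\}$ one has $\Expect[\max_i f(X_i)]=\Prob\{\max(X_1,X_2)>1\}\to 0$ as $\eps\to 0$ while $\Expect[\max_i f(Y_i)]\approx\Prob\{W>1\}>0$, violating the conclusion. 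The classical Slepian lemma (as stated in the literature the paper cites) includes the hypothesis $\Expect[X_i^2]=\Expect[Y_i^2]$, under which the difference condition is equivalent to $\Expect[X_iX_j]\leq\Expect[Y_iY_j]$ and your argument closes; the version in this paper has silently dropped that hypothesis. So your proposal is, in substance, the correct proof of the correctly stated theorem, and what you identified as ``the main obstacle'' is in fact a missing hypothesis in the statement rather than a defect of your method. Note that the omission is not cosmetic for the paper either: the Slepian-based proof of Theorem~\ref{thm:equivalence} applies the indicator case of Theorem~\ref{thm:slepian} to the Gaussians $\ip{\vct{x}_i}{\vct{g}}$ and $\ip{\vct{y}_i}{\vct{g}}$, whose variances $\norm{\vct{x}_i}^2$ and $\norm{\vct{y}_i}^2$ are not matched by the hypotheses there, so only the expectation-level (Sudakov--Fernique) comparison is available in that argument without additional work.
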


\begin{proof}[Proof of Theorem~\ref{thm:equivalence}, Slepian version]
Let $X=\{\vct{x}_1,\dots,\vct{x}_k\}$ and $Y=\{\vct{y}_1,\dots,\vct{y}_k\}$ be sets of distinct points in $\R^d$ satisfying the inequalities~\eqref{eq:pairwise-ineq}. It is enough to show that $\rho(X)\geq \rho(Y)$. 
Assume without loss of generality that the smallest enclosing balls of $X$ and $Y$ are centered at $\zerovct$, and set $R=\rho(X)$, $R'=\rho(Y)$. Define the random variables $X_i=\ip{\vct{x}_i}{\vct{g}}$ and $Y_i=\ip{\vct{y}_i}{\vct{g}}$ for $1\leq i\leq k$, where $\vct{g}$ is a centered Gaussian random vector in $\R^d$. Then for $1\leq i<j\leq k$,
\begin{equation*}
  \Expect[(X_i-X_j)^2] = \Expect[\ip{\vct{x}_i-\vct{x}_j}{\vct{g}}^2] = \norm{\vct{x}_i-\vct{x}_j}^2 \geq  \norm{\vct{y}_i-\vct{y}_j}^2 = \Expect[(Y_i-Y_j)^2],
\end{equation*}
so that the conditions of Theorem~\ref{thm:slepian} are satisfied.

\begin{figure}[H]
\centering
\begin{tikzpicture}[thick,rotate=0,scale=0.5]
\coordinate (P1) at (-2,-1);
\coordinate (P2) at (2,-1);
\coordinate (P3) at (-1,2);
\coordinate (C) at (0,0);
\coordinate (RP1) at (2,1);
\coordinate (RP2) at (-2,1);
\coordinate (RP3) at (1,-2);

\draw[color=black, thick] (C) -- (P1);
\draw[color=black, thick] (C) -- (P2);
\draw[color=black, thick] (C) -- (P3);
\filldraw[black] (P1) circle (2pt);
\filldraw[black] (P2) circle (2pt);
\filldraw[black] (P3) circle (2pt);
\filldraw[black] (C) circle (2pt);
\node (N1) at (P1) [label=180:$\vct{x}_1$] {};
\node (N2) at (P2) [label=0:{$\vct{x}_2$}] {};
\node (N3) at (P3) [label=90:{$\vct{x}_3$}] {};
\node (NC) at (C) [label={90:{$\quad 0$}}] {};
\node (N1) at (P1) [label=180:$\vct{x}_1$] {};
\node (N2) at (P2) [label=0:{$\vct{x}_2$}] {};
\node (N3) at (P3) [label=90:{$\vct{x}_3$}] {};

\def\firstcircle {(P1) circle (3cm)}
\def\secondcircle {(P2) circle (3cm)}
\def\thirdcircle {(P3) circle (3cm)}
\draw[black, thick] \firstcircle;
\draw[black, thick] \secondcircle;
\draw[black, thick] \thirdcircle;

\draw[black, thick, dashed] (C) circle (0.745cm);

\draw[color=black, thick, dashed] (C) -- (RP1);
\draw[color=black, thick, dashed] (C) -- (RP2);
\draw[color=black, thick, dashed] (C) -- (RP3);

\coordinate (T1) at (0.683,0.342);
\coordinate (T2) at (-0.683,0.342);
\coordinate (T3) at (0.342,-0.683);
\draw[color=blue, thick] (0.683-1,0.342+2) -- (0.683+1,0.342-2);
\draw[color=blue, thick] (-0.683+1,0.342+2) -- (-0.683-1.5,0.342-3);
\draw[color=blue, thick] (0.342-3,-0.683-1.5) -- (0.342+2,-0.683+1);

\begin{pgfonlayer}{bg}
 \draw[black!0, fill=red!25, thick] (C) circle (3cm);
 \draw[black!0, fill=green!25, thick] (-1.708,-1.708) -- (0,1.708) -- (1.04,-0.35) -- cycle;
 \clip \firstcircle;
 \clip \secondcircle;
 \fill[blue!25] \thirdcircle;
\end{pgfonlayer}

\end{tikzpicture}
\caption{Slepian's Lemma and enclosing balls. The triangle represents the polar body whose Gaussian measure is the subject of Slepian's Inequality.}\label{fig:circles}
\end{figure}
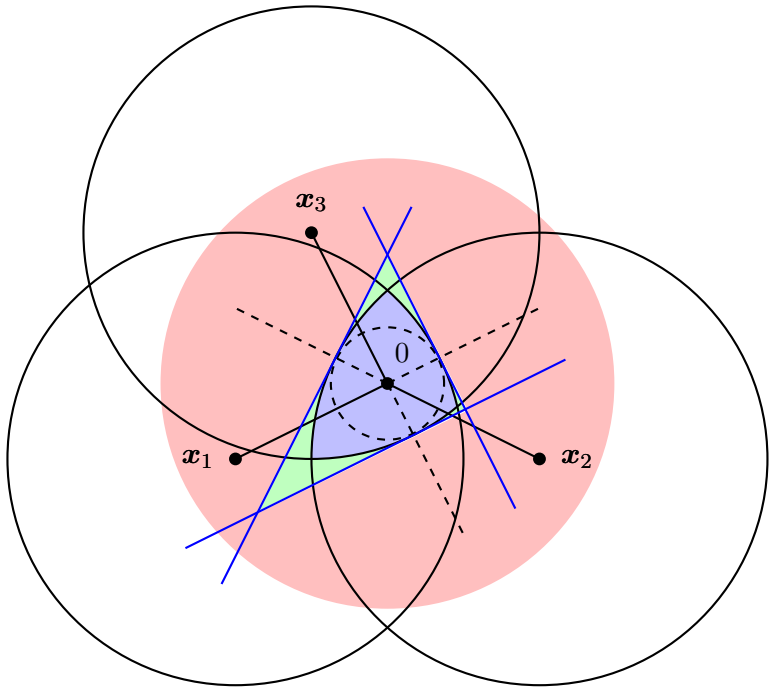

If we apply Theorem~\ref{thm:slepian} with $f(x)=x$,
we get a comparison between the Gaussian widths of the two sets.
We can therefore apply Theorem~\ref{thm:slepian} with $f(x)=\onemtx\{x> t\}$ to conclude that, by using the complements,
\begin{equation}\label{eq:slepian-ineq}
  \Prob\{\max_i X_i \leq t\} \leq \Prob\{\max_i Y_i \leq t\}.
\end{equation}
for all $t>0$. If we set $P=\conv\{\vct{x}_1,\dots,\vct{x}_k\}$, then the set $\{\vct{x}\mid \max_i \ip{\vct{x}_i}{\vct{x}}\leq t\}$ is just
$t P^*$, where $P^*$ is the polar body of $P$ (see Figure~\ref{fig:circles}), and $\Prob\{\max_i X_i \leq t\}$ is the Gaussian measure of this polar body.
If this polar body is not empty, it contains a closed ball of radius $t/R$: in fact, if $\vct{x}$ is in such a ball, then for all $i$, $\ip{\vct{x}_i}{\vct{x}}\leq \norm{\vct{x}_i}\norm{\vct{x}} \leq t$. By the same reasoning, the set $\{\vct{x}\mid \max_i\ip{\vct{y}_i}{\vct{x}}\leq t\}$ contains a ball of radius $t/R'$. We can therefore decompose the probability
\begin{equation*}
  \Prob\{\max_i X_i \leq t\} = \Prob\left\{\norm{\vct{g}} \leq \frac{t}{R}\right\} + g(t), \quad \Prob\{\max_i Y_i \leq t\} = \Prob\left\{\norm{\vct{g}} \leq \frac{t}{R'}\right\} + h(t),
\end{equation*}
with $g(t),h(t)\to 0$ as $t\to \infty$ (this follows from the decay of the Gaussian distribution). In particuliar,~\eqref{eq:slepian-ineq} implies that for every $\e>0$ there
exists a $t_0>0$ such that
\begin{equation*}
  \Prob\left\{\norm{\vct{g}} \leq \frac{t}{R}\right\} \leq \Prob\left\{\norm{\vct{g}} \leq \frac{t}{R'}\right\} + \e
\end{equation*}
for all $t>t_0$. This is only possible if $R\geq R'$, which concludes the proof.
\end{proof}

\section{Gaussian width, entropy, and doubling dimension}\label{sec:doubling}
A common notion in the study of metric spaces is the {\em doubling dimension}~\cite{assouad1983plongements, heinonen2012lectures}. This concept has been used to measure the intrinsic dimension of sets in topological data analysis, see~\cite[Chapter 5]{oudot2015persistence} or~\cite{sheehy2013linear,choudhary2014local} for some examples.

\begin{definition}
The {\em doubling constant} of a metric space $X$ is the smallest number $\lambda$ such that every ball of radius $R$ can be covered by $\lambda$ balls of radius $R/2$. The {\em doubling dimension} is defined as $\dim_d(X)=\log_2(\lambda)$.
\end{definition}

It is intuitively clear that the doubling dimension of Euclidean space $\R^d$ is of order $d$, but it can be considerably lower for certain subsets of $\R^d$. The metric spaces we consider here are finite subspaces $X\subset \R^d$. The {\em diameter} $\diam(X)$ is the largest pairwise distance between points in $X$. The {\em spread} $\Delta$ of such a set is the ratio of the diameter to the smallest pairwise distance between points in $X$. If a space $X$ has doubling dimension $\dim_d(X)$, then it is easy to see that any ball of radius $R$ can be covered with $\lceil\lambda^{\log_2(R/r)}\rceil = \lceil (R/r)^{\dim_d(X)}\rceil$ balls of radius $r$. 
From this it can be deduced that the cardinality $n=|X|$, the spread $\Delta$ and the doubling dimension are related as
\begin{equation}\label{eq:ndeltad}
 n\leq \Delta^{\dim_d(X)}.
\end{equation}
In~\cite{sheehy2013linear}, a linear size approximation to the Vietoris-Rips complex has been derived and analysed in terms of the doubling dimension.
The approach was further extended in~\cite{choudhary2014local}, where a local version of the doubling dimension is introduced.

It turns out that the doubling dimension is closely related to the Gaussian width, a fact pointed out in~\cite{indyk2007nearest}. This relationship provides an alternative way of expressing the cardinality of a set of points in terms of the spread and some intrinsic geometric parameter. To make this relationship precise, we need to introduce the inequalities of Dudley and Sudakov. References are \cite{talagrand2014upper} or \cite[Chapter 13]{boucheron2013concentration}.

A subset $\mathcal{N}_\alpha\subset X$ is called an $\alpha$-net, if $\norm{\vct{x}-\vct{y}}>\alpha$ for distinct $\vct{x},\vct{y}\in \mathcal{N}_\alpha$, and $\mathcal{N}_\alpha$ has maximal cardinality among sets with this property. 
For $\alpha>0$ let $N(X,\alpha)$ denote the cardinality of an $\alpha$-net. The logarithm $H(X,\alpha) = \log N(X,\alpha)$ is often referred to as the {\em metric entropy} of $X$. (One version of) Dudley's upper bound on the Gaussian width in terms of metric entropy is given as follows (following~\cite[Corollary 13.2]{boucheron2013concentration}).

\begin{theorem}(Dudley~\cite{dudley1967sizes})
Let $X\subset \R^d$ be a finite set. Then
\begin{equation}\label{eq:dudley}
 w(X) \leq 12\int_0^{\diam(X)/2} \sqrt{H(X,t)} \ \diff{t}.
\end{equation}
\end{theorem}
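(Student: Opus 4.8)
The plan is to run Dudley's classical \emph{chaining} argument applied to the Gaussian process $G_{\vct{x}} := \ip{\vct{g}}{\vct{x}}$ indexed by $\vct{x}\in X$. The crucial observation is that the canonical (variance) metric of this process is exactly the Euclidean one, since $\Expect[(G_{\vct{x}}-G_{\vct{y}})^2] = \norm{\vct{x}-\vct{y}}^2$, so the metric entropy $H(X,t)$ in the statement controls precisely the covering structure relevant to the process. The two ingredients I would assemble first are: (i) the Gaussian maximal inequality, namely that for centered Gaussians $Z_1,\dots,Z_N$ with $\Expect[Z_i^2]\le\sigma^2$ one has $\Expect[\max_i Z_i]\le\sigma\sqrt{2\log N}$ (via the exponential-moment/Jensen estimate $\exp(\lambda\Expect\max_i Z_i)\le N\exp(\lambda^2\sigma^2/2)$ and optimizing in $\lambda$); and (ii) the fact that a maximal $\alpha$-separated set $\mathcal{N}_\alpha$ is automatically an $\alpha$-cover of $X$, so the nearest-point map $\pi_\alpha\colon X\to\mathcal{N}_\alpha$ satisfies $\norm{\vct{x}-\pi_\alpha(\vct{x})}\le\alpha$ for every $\vct{x}$.

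Next I would set up dyadic scales $\epsilon_j=2^{-j}\diam(X)$ for $j\ge 0$, with nets $T_j:=\mathcal{N}_{\epsilon_j}$ and projections $\pi_j:=\pi_{\epsilon_j}$. Since no two points are farther apart than $\diam(X)=\epsilon_0$, the coarsest net is a single point, $T_0=\{\vct{x}_0\}$; since $X$ is finite, $T_j=X$ once $\epsilon_j$ drops below the minimal pairwise distance, so $\pi_j=\mathrm{id}$ for all large $j$. Because $\Expect[G_{\vct{x}_0}]=0$, centering at $\vct{x}_0$ leaves $w(X)$ unchanged, and I would write the (finite, for each $\vct{x}$) telescoping identity
\begin{equation*}
G_{\vct{x}} - G_{\vct{x}_0} = \sum_{j\ge 1}\bigl(G_{\pi_j(\vct{x})} - G_{\pi_{j-1}(\vct{x})}\bigr).
\end{equation*}
Taking the supremum over $\vct{x}$ inside the sum and then expectations gives
\begin{equation*}
w(X) = \Expect\sup_{\vct{x}\in X}\bigl(G_{\vct{x}} - G_{\vct{x}_0}\bigr) \le \sum_{j\ge 1}\Expect\sup_{\vct{x}\in X}\bigl(G_{\pi_j(\vct{x})} - G_{\pi_{j-1}(\vct{x})}\bigr).
\end{equation*}

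To bound the $j$-th term, note that the increments $G_{\pi_j(\vct{x})}-G_{\pi_{j-1}(\vct{x})}$ range over at most $\lvert T_j\rvert\cdot\lvert T_{j-1}\rvert\le N(X,\epsilon_j)^2$ distinct centered Gaussians (finer nets being larger, $\lvert T_{j-1}\rvert\le\lvert T_j\rvert$), each of standard deviation at most $\norm{\pi_j(\vct{x})-\pi_{j-1}(\vct{x})}\le\epsilon_j+\epsilon_{j-1}=3\epsilon_j$ by the triangle inequality through $\vct{x}$. The maximal inequality then yields
\begin{equation*}
\Expect\sup_{\vct{x}}\bigl(G_{\pi_j(\vct{x})}-G_{\pi_{j-1}(\vct{x})}\bigr) \le 3\epsilon_j\sqrt{2\log N(X,\epsilon_j)^2} = 6\,\epsilon_j\sqrt{H(X,\epsilon_j)}.
\end{equation*}
Finally I would convert the geometric sum into the integral: since $\epsilon_j=2(\epsilon_j-\epsilon_{j+1})$ and $t\mapsto\sqrt{H(X,t)}$ is non-increasing, each term obeys $\epsilon_j\sqrt{H(X,\epsilon_j)}\le 2\int_{\epsilon_{j+1}}^{\epsilon_j}\sqrt{H(X,t)}\,\diff{t}$, and summing over $j\ge 1$ telescopes the intervals to $(0,\epsilon_1]=(0,\diam(X)/2]$, giving the constant $6\cdot 2=12$.

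The individual analytic ingredients (the Gaussian maximal inequality, the triangle-inequality variance bound) are standard; the step requiring the most care is the combinatorial and endpoint bookkeeping of the chaining and integral comparison: verifying that a maximal packing is a covering of the stated radius, counting the admissible ``links'' at each level to obtain the $N(X,\epsilon_j)^2$ factor (and hence the factor $2$ inside the square root, which is exactly what forces the constant upward rather than yielding something smaller), and aligning the dyadic endpoints so that the summed intervals fill precisely $(0,\diam(X)/2]$ and the constants multiply out to $12$.
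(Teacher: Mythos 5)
The paper offers no proof of this statement --- it is quoted (following~\cite[Corollary 13.2]{boucheron2013concentration}) as a classical result of Dudley --- and your chaining argument is exactly the standard one underlying that citation, so there is nothing to compare beyond correctness. Your bookkeeping checks out: maximality of the $\epsilon_j$-separated net makes it an $\epsilon_j$-cover, finiteness of $X$ terminates the chain, the link count $\lvert T_j\rvert\cdot\lvert T_{j-1}\rvert\le N(X,\epsilon_j)^2$ is justified by monotonicity of the packing number, the variance bound $3\epsilon_j$ combined with the Gaussian maximal inequality gives $6\epsilon_j\sqrt{H(X,\epsilon_j)}$, and the comparison $\epsilon_j\sqrt{H(X,\epsilon_j)}\le 2\int_{\epsilon_{j+1}}^{\epsilon_j}\sqrt{H(X,t)}\idiff{t}$ summed over the intervals filling $(0,\diam(X)/2]$ produces precisely the constant $12$.
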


There is a corresponding lower bound, due to Sudakov. A reference is again~\cite{boucheron2013concentration}, though explicit constants are never included in the literature.

\begin{theorem}(Sudakov~\cite{sudakov})
Let $X\subset \R^d$ be a finite set with $r$ the smallest distance between points in $X$. Then
\begin{equation}\label{eq:sudakov}
 w(X) \geq \frac{3}{5}r \sqrt{\log(|X|)}.
\end{equation}
\end{theorem}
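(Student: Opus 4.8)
The plan is to reuse Slepian's inequality (Theorem~\ref{thm:slepian}) to reduce the problem to a maximum of \emph{independent} Gaussians, and then to bound that maximum from below explicitly. Write $X=\{\vct{x}_1,\dots,\vct{x}_n\}$ with $n=|X|$, let $\vct{g}$ be a standard Gaussian vector in $\R^d$, and set $X_i=\ip{\vct{x}_i}{\vct{g}}$. Since $X$ is finite, $w(X)=\Expect[\max_{1\le i\le n} X_i]$, and this quantity is unchanged under a translation of $X$, so the absence of a centering hypothesis causes no difficulty. The $X_i$ are centered Gaussians with $\Expect[(X_i-X_j)^2]=\norm{\vct{x}_i-\vct{x}_j}^2\ge r^2$ for $i\neq j$.

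Next I would introduce the comparison process. Let $h_1,\dots,h_n$ be i.i.d.\ standard Gaussians and put $Y_i=\tfrac{r}{\sqrt 2}h_i$, so that $\Expect[(Y_i-Y_j)^2]=r^2$ for $i\neq j$ and $0$ for $i=j$. Then $\Expect[(X_i-X_j)^2]\ge \Expect[(Y_i-Y_j)^2]$ for all pairs, and applying Theorem~\ref{thm:slepian} with the (monotone increasing) identity function yields
\begin{equation*}
  w(X)=\Expect[\max_i X_i]\ge \Expect[\max_i Y_i]=\frac{r}{\sqrt 2}\,\Expect[\max_{1\le i\le n} h_i].
\end{equation*}
It therefore suffices to show $\Expect[\max_i h_i]\ge \tfrac{3\sqrt 2}{5}\sqrt{\log n}$.

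For the final step I would use a threshold argument. Choosing $t$ with $\Prob(h_1>t)=1/n$ gives $\Prob(\max_i h_i\le t)=(1-1/n)^n\le e^{-1}$, hence $\Prob(\max_i h_i>t)\ge 1-e^{-1}$. Writing $M=\max_i h_i$ and splitting $\Expect[M]=\Expect[M^+]-\Expect[M^-]$, the positive part is bounded below by $t\,\Prob(M>t)\ge t(1-e^{-1})$, while the negative part is controlled by $\Expect[M^-]\le \Expect[h_1^-]=1/\sqrt{2\pi}$ (because $M\ge h_1$ forces $-M\le -h_1$). A two-sided Gaussian tail estimate then converts the defining relation $\Prob(h_1>t)=1/n$ into a lower bound $t\gtrsim\sqrt{2\log n}$, and assembling these pieces produces the claimed constant.

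The main obstacle is precisely this explicit constant. One must turn $\Prob(h_1>t)=1/n$ into a clean lower bound on $t$ using non-asymptotic two-sided tail bounds, and simultaneously absorb both the additive loss $\Expect[M^-]\le 1/\sqrt{2\pi}$ and the multiplicative factor $(1-e^{-1})$, so that $\Expect[\max_i h_i]\ge \tfrac{3\sqrt 2}{5}\sqrt{\log n}$ holds for \emph{every} $n\ge 2$ rather than merely asymptotically. The Slepian comparison and the overall structure are routine; extracting the sharp numerical constant $3/5$ uniformly in $n$ is the delicate part, and may require checking the smallest values of $n$ separately.
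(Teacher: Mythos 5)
Your skeleton — Slepian/Sudakov--Fernique comparison against the i.i.d.\ process $Y_i=\tfrac{r}{\sqrt2}h_i$, then a lower bound on $\Expect[\max_i h_i]$ — is the canonical proof of Sudakov's minoration (the paper itself gives no proof, only the citation to~\cite{boucheron2013concentration}), and the first half of your argument is sound: the translation-invariance remark, the increment computation, and the reduction to showing $\Expect[\max_{i\le n} h_i]\ge \tfrac{3\sqrt2}{5}\sqrt{\log n}$ are all correct. The genuine gap is exactly the step you defer to the end: that intermediate inequality is \emph{false} for small $n$. For $n=2$ one has $\Expect[\max(h_1,h_2)]=1/\sqrt{\pi}\approx 0.564$ while $\tfrac{3\sqrt2}{5}\sqrt{\log 2}\approx 0.706$; for $n=3$, $\Expect[\max_i h_i]=\tfrac{3}{2\sqrt{\pi}}\approx 0.846$ while the target is $\approx 0.889$. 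Worse, no amount of care can repair this, because the theorem as printed already fails at these cardinalities: for $X=\{\zerovct,\,r\vct{e}_1\}$ (a $1$-simplex, whose associated Gaussian process is exactly your comparison process) one computes $w(X)=r/\sqrt{2\pi}\approx 0.399\,r$, which is below $\tfrac35 r\sqrt{\log 2}\approx 0.4996\,r$; the regular triangle with side $r$ fails similarly. The two-point example shows the best constant valid for \emph{all} $n\ge 2$ is $1/\sqrt{2\pi\log 2}\approx 0.479<3/5$, and the constant $3/5$ only becomes correct from $n=4$ onward (the regular simplex gives $\tfrac{1}{\sqrt2}\Expect[\max_{i\le 4}h_i]\approx 0.728\ge \tfrac35\sqrt{\log 4}\approx 0.706$). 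So ``checking the smallest values of $n$ separately'' does not close your argument — it refutes it — and a correct writeup must either restrict to $n\ge 4$ or weaken the constant.

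A secondary, quantitative problem: even on the range where the claim is true, your threshold device is too lossy to reach it. It yields at best $(1-e^{-1})t_n-1/\sqrt{2\pi}$ with $t_n=\Phi^{-1}(1-1/n)\le\sqrt{2\log n}$, and since $(1-e^{-1})\sqrt{2}\approx 0.894$ exceeds the required $\tfrac{3\sqrt2}{5}\approx 0.849$ by only $0.045$ per unit of $\sqrt{\log n}$, absorbing the additive loss $1/\sqrt{2\pi}$ alone forces $\sqrt{2\log n}\gtrsim 12$, i.e.\ $n\gtrsim e^{75}$ — before even accounting for the deficit of $t_n$ below $\sqrt{2\log n}$. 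To cover moderate $n\ge 4$ you would need a sharper estimate, e.g.\ working directly with $\Expect[\max_i h_i]=\int_0^\infty\bigl(1-\Phi(t)^n\bigr)\,\diff{t}-\int_{-\infty}^0 \Phi(t)^n\,\diff{t}$ together with exact evaluation at the first few values of $n$, rather than the crude split into $t\,\Prob(M>t)$ and $\Expect[M^-]$.
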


The upper bound in the following Proposition is from~\cite[(2)]{indyk2007nearest}, with the difference that, as in Theorem~\ref{thm:mainA}, we look at the Gaussian width of the set of normalised differences,
\begin{equation*}
 T = \left\{\frac{\vct{x}-\vct{y}}{\norm{\vct{x}-\vct{y}}} \mid \vct{x},\vct{y}\in X\right\}.
\end{equation*}
Contrary to the tradition, the bounds are stated using rather specific constants instead of only ``some universal constant $C$ or $L$''. While the precise values depend on details of the chosen analysis and are not important, for someone looking into actually using dimensionality reduction schemes it may be of interest to know if the ``universal constants'' are in the tens or in the billions.

\begin{proposition}\label{prop:gwidth-dim}
Let $X\subset \R^d$ be a finite set. Then
\begin{equation*}
  \frac{36}{25}\cdot \Delta(X)^{-2} \dim_d(X) \leq w^2(T) \leq 227 \cdot \Delta(X)^2 \dim_d(X)
\end{equation*}
\end{proposition}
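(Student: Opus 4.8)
The plan is to reduce both inequalities to two-sided estimates on the Gaussian width $w(X)$ of the cloud itself, bridging with a single normalisation step. Write $D=\{\vct{x}-\vct{y}\mid \vct{x},\vct{y}\in X\}$ for the (symmetric) difference set and let $r$ be the smallest pairwise distance, $\Delta=\Delta(X)$ the spread, so every nonzero $\vct{d}\in D$ has $\norm{\vct{d}}\in[r,\Delta r]$. Translation invariance and symmetry give $w(D)=2w(X)$, and since $T$ is the normalisation of $D$ and is itself symmetric, $w(T)=\Expect\sup_{\vct{d}\in D}\abs{\ip{\vct{g}}{\vct{d}}}/\norm{\vct{d}}$. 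Bounding $\norm{\vct{d}}$ below by $r$ and above by $\Delta r$ yields
\[
  \frac{2\,w(X)}{\Delta\, r}\le w(T)\le \frac{2\,w(X)}{r}.
\]
This is exactly where the two powers of $\Delta$ originate, and it leaves only $w(X)$ to estimate.

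For the upper bound I would feed Dudley's inequality~\eqref{eq:dudley} the covering numbers coming from the doubling dimension. As $X$ lies in a ball of radius $\diam(X)=\Delta r$, the doubling property gives $N(X,t)\le(\diam(X)/t)^{\dim_d(X)}$, hence $H(X,t)\le\dim_d(X)\log(\diam(X)/t)$. Substituting and rescaling $t=\diam(X)\,s$ converts the entropy integral into a dimension-free Gaussian integral,
\[
  w(X)\le 12\sqrt{\dim_d(X)}\int_0^{\diam(X)/2}\!\!\sqrt{\log(\diam(X)/t)}\idiff{t}
     =12\sqrt{\dim_d(X)}\,\diam(X)\!\int_0^{1/2}\!\!\sqrt{\log(1/s)}\idiff{s},
\]
and $\int_0^{1/2}\sqrt{\log(1/s)}\idiff{s}$ is an explicit constant (bounded by $\Gamma(3/2)=\sqrt{\pi}/2$). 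Combining with $w(T)\le 2w(X)/r$ cancels $r$, leaves a single factor $\Delta$, and squaring gives $w^2(T)\le C\,\Delta^2\dim_d(X)$; tracking the numerical constant—and using the circumradius $\rho(X)\le\diam(X)/\sqrt2$ in place of $\diam(X)$ to sharpen the integral—produces $C=227$.

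For the lower bound I would invoke Sudakov's inequality~\eqref{eq:sudakov}, $w(X)\ge\tfrac35 r\sqrt{\log n}$ with $n=\abs{X}$, and insert it into the left inequality above:
\[
  w^2(T)\ge\Big(\frac{2\,w(X)}{\Delta\, r}\Big)^2\ge\frac{4}{\Delta^2 r^2}\cdot\frac{9}{25}r^2\log n
   =\frac{36}{25}\,\Delta^{-2}\log n.
\]
It remains to pass from $\log n$ to $\dim_d(X)$: any ball of $X$ contains at most $n$ points and can thus be covered by $n$ balls of half its radius, so the doubling constant satisfies $\lambda\le n$ and $\dim_d(X)=\log_2\lambda\le\log_2 n$, i.e.\ $\log n\ge\dim_d(X)$ up to the factor $\log 2$. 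This gives a lower bound of the stated form $\tfrac{36}{25}\Delta^{-2}\dim_d(X)$.

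The routine parts are the normalisation estimate and the evaluation of the Gaussian integral. The step needing the most care is the dictionary between the doubling dimension and metric entropy at the matching scale: the doubling property bounds covering numbers of $X$ directly, so it slots cleanly into Dudley for the upper bound, whereas for the lower bound one must route through $n$ via $\lambda\le n$ and Sudakov, and it is precisely the gap between the smallest ($r$) and largest ($\Delta r$) norms occurring in $D$ that forces the factor $\Delta^2$ separating the two bounds rather than a single tight estimate. Keeping the explicit constants $227$ and $36/25$ honest, instead of an anonymous universal constant, is the only genuinely fiddly bookkeeping.
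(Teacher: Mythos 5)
Your proposal follows the paper's proof skeleton almost exactly in two of its three steps: the normalisation bridge $\frac{2}{\Delta r}w(X)\le w(T)\le\frac{2}{r}w(X)$ is identical to the paper's \eqref{eq:diffwidths} (with $R=\Delta r$), including the symmetrisation identity $w(X-X)=2w(X)$, and the upper bound via Dudley's inequality \eqref{eq:dudley} fed with the doubling covering bound $N(X,t)\le(\diam(X)/t)^{\dim_d(X)}$ is the paper's argument verbatim. The genuine divergence is in the lower bound. The paper localises: it picks a ball witnessing the doubling constant, extracts $\lambda$ centers claimed to be pairwise $\alpha$-separated, and applies Sudakov to that subset, so that $\sqrt{\log_2\lambda}=\sqrt{\dim_d(X)}$ appears exactly. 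You instead apply Sudakov \eqref{eq:sudakov} globally to all of $X$ and convert $\log n$ to $\dim_d(X)$ via $\lambda\le n$. Your route is arguably more robust --- the paper's assertion that a \emph{minimal} covering has $\alpha$-separated centers is not literally true and really needs a packing-versus-covering argument, which you bypass entirely --- but it is lossier by a constant: with the paper's convention that $\log$ is natural (as in the stated Sudakov bound $w(X)\ge\frac{3}{5}r\sqrt{\log|X|}$), your chain yields $\frac{36}{25}(\log 2)\,\Delta^{-2}\dim_d(X)\approx\Delta^{-2}\dim_d(X)$, not the stated $\frac{36}{25}\Delta^{-2}\dim_d(X)$. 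You flag the $\log 2$ yourself and then assert the stated constant anyway; as written you only recover $36/25$ if Sudakov is read with $\log_2$ (which, to be fair, is what the paper's own proof silently does when it writes $\sqrt{\log_2|S|}$).

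On the constant $227$: your circumradius detour is unnecessary and not obviously licit, since the doubling bound on $N(X,t)$ is anchored at balls of the metric space $X$, whereas the circumcenter need not lie in $X$; replacing $\diam(X)$ by $\rho(X)\le\diam(X)/\sqrt{2}$ in the entropy bound would need separate justification. The clean source of $227$ is simply not to relax the entropy integral: $\int_0^{1/2}\sqrt{\log(1/t)}\idiff{t}\approx 0.628$, and $\left(24\cdot 0.628\right)^2\approx 227$, whereas bounding the integral by $\int_0^1\sqrt{\log(1/t)}\idiff{t}=\sqrt{\pi}/2$ --- which is your fallback, and is in fact what the paper's displayed proof does --- only yields $144\pi\approx 452$. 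So on this point your instinct that the written relaxation overshoots the stated constant is correct, and the fix is the sharper integral rather than Jung's inequality.
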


\begin{proof}
We first relate the Gaussian width of $T$ to that of $X$. 
Let $R=\diam(X) = \max_{\vct{x},\vct{y}}\norm{\vct{x}-\vct{y}}$ and $r=\min_{\vct{x}\neq \vct{y}}\norm{\vct{x}-\vct{y}}$, so that $\Delta = R/r$. Note that, by the symmetry of the Gaussian distribution,
\begin{equation*}
 w(X-X) = \Expect_{\vct{g}}\sup_{\vct{x},\vct{y}\in X} \ip{\vct{g}}{\vct{x}-\vct{y}} = 2\Expect\sup_{\vct{x}\in X}\ip{\vct{g}}{\vct{x}} = 2w(X).
\end{equation*}
Using this, one readily derives the bounds
\begin{equation}\label{eq:diffwidths}
  \frac{2}{R}w(X) \leq w(T) \leq \frac{2}{r} w(X).
\end{equation}

The upper bound in the statement of the theorem was given in~\cite{indyk2007nearest}, though we recreate the argument with slightly better constants.
From~\eqref{eq:ndeltad}, we get the inequality
\begin{equation*}
  N(X,\alpha) \leq \left(\frac{R}{\alpha}\right)^{\dim_d(X)},
\end{equation*}
which implies, using Dudley's bound~\eqref{eq:dudley},
\begin{align*}
  w(X)&\leq 12 \sqrt{\dim_d(X)} \int_0^{R/2} \sqrt{\log(R/t)} \ \diff{t} \\
  &= 12 R\sqrt{\dim_d(X)} \int_0^{1/2} \sqrt{\log(1/t)} \ \diff{t}\\
  & \leq 12 R\sqrt{\dim_d(X)} \int_0^1 \sqrt{\log(1/t)} \ \diff{t} = 12 R \sqrt{\dim_d(X)} \frac{\sqrt{\pi}}{2}.
\end{align*}
Squaring the right-hand side and combining with~\eqref{eq:diffwidths} gives the desired bound.

For the lower bound, let $\vct{p}\in X$ and $B(\vct{p},\alpha)$ be a ball of radius $\alpha$ such that there is a minimal covering $\mathcal{C}_{\alpha/2}$ of $B(\vct{p},\alpha)\cap X$ with $\lambda$ balls $B(\vct{x}_i,\alpha/2)$, $1\leq i\leq \lambda$, where $\lambda$ is the doubling constant. Since the covering is minimal, we have $\norm{\vct{x}_i-\vct{x}_j}\geq \alpha$ for all $1\leq i,j\leq \lambda$. Set $S=\{\vct{x}_1,\dots,\vct{x}_\lambda\}$. Then
\begin{equation*}
 w(X) \geq w(S) \geq \frac{3}{5}\alpha \sqrt{\log_2(|S|)} \geq \frac{3}{5} r \sqrt{\dim_d(X)},
\end{equation*} 
where the first inequality follows from the monotonicity of the Gaussian width, and the second from Sudakov's inequality. Squaring the right-hand side and combining with~\eqref{eq:diffwidths}  shows the lower bound.
\end{proof}

The ``correct'' way of bounding the Gaussian width from above and below would be via Talagrand's $\gamma_2$ functional. Following~\cite[Section 2.2]{talagrand2014upper}, we call a nested sequence of partitions $\mathcal{A}=(\mathcal{A}_n)$ of $X$ {\em admissible}, if each partition satisfies the cardinality bound $|\mathcal{A}_n|\leq 2^{2^n}$ for $n\geq 1$. For each $n$ and each $\vct{x}\in X$ there is a unique element $A_n(\vct{x})\in \mathcal{A}_n$ which contains $\vct{x}$. The $\gamma_2$ functional is then defined as 
\begin{equation*}
 \gamma_2(X) = \inf_{\mathcal{A}} \sup_{\vct{x}\in X} \sum_{n\geq 0} 2^{n/2} \diam(A_n(\vct{x})),
\end{equation*}
where the infimum is over all admissible partition sequences.
Talagrand's Majorizing Measures Theorem~\cite[Theorem 2.4.1]{talagrand2014upper} gives bounds on the Gaussian width in terms of this functional.
For some constant $C$, 
\begin{equation*}
 \frac{1}{C} \gamma_2(X) \leq w(X) \leq C \gamma_2(X).
\end{equation*}
Note that we can alternatively represent an admissible sequence of partitions as a hierarchical tree, where each level approximates the data set more accurately. 
We suspect that proofs based on net-trees, such as those of the statements in~\cite{sheehy2013linear,choudhary2014local}, could be formulated in terms of the Gaussian width by associating to the point cloud (and all the associated nets) a stochastic process $g_{\vct{x}} = \ip{\vct{g}}{\vct{x}}$ for $\vct{x}\in T$.
 
\section{Experiments}\label{sec:experiments}
The theory behind the bounds such as the one in Theorem~\ref{thm:oymak-main} is involved, and the bounds are, in the form stated, primarily of theoretical interest and more explanatory than practical.
To evaluate the applicability of the random projection method, experiments are needed. A first aim is to establish whether the added cost of projecting the data to a lower dimensional space offsets the benefits of working in lower dimensions. 

The minimum amount of work on the data set consists of computing the pairwise distances of the data points points, so we first focus on this task. If $f(d)$ is the cost of projecting one data vector and $c(d)$ the cost of computing the distance between two vectors in $\R^d$, then the total cost of computing all the distances is $n(n-1)c(d)/2$ for the original data, and $n(n-1)c(m)/2+nf(d)$ when applying projection. It follows that the projection is effective whenever the number of samples satisfies
$n>2 f(d)/(c(d)-c(m))+1$.
When measuring the cost in number of arithmetic operations, then typically $f(d) = C\cdot d\log d$ (when using FFT-based algorithms) or $f(d) = sd$ (when using a sparse Johnson-Lindenstrauss transform), and $c(d)-c(m) = 3(d-m)$. If we fix a proportion $m=\sigma d$, then the minimum number of samples for the projection to be computationally effective is proportional to $\log(d)$ (when using FFT-based methods) or $s$ (when using sparse projections). Note that the theory requires random sign changes before projecting, adding $d$ operations (these are cheaper in practice than a normal arithmetic operation). When using approximations that rely on a subset of the distances of size linear in $n$, then we get an improvement if the projection can be computed faster than a certain multiple of $c(d)-c(m)$, independent of the number of samples.

In order to empirically test the projection costs, we use an example using a subsampled fast Hadamard transform. The Hadamard transform $H_m$
is a $2^m\times 2^m$ matrix, defined recursively as
\begin{equation*}
 H_m = \frac{1}{\sqrt{2}}\begin{pmatrix}
   H_{m-1} & H_{m-1}\\
   H_{m-1} & -H_{m-1}
 \end{pmatrix},
\end{equation*}
with $H_0=1$. It is an orthogonal transformation that can be multiplied efficiently to a vector in $\R^d$ with $O(d\log d)$ operations using a variant of the Fast Fourier Transform.
It is also an example of a Bounded Orthonormal System, and randomly subsampled rows of this matrix satisfy the Restricted Isometry Property of order $s$ with good constants,
provided $m\geq C s\log(d)$~\cite[Chapter 12.1]{Foucart2013}. In particular, it is a transform that can be used in conjuction with Theorem~\ref{thm:oymak-main}.
For computing the Fast Hadamard Transform (FHT), we used the FFHT Python package, which is part of the FALCONN project~\cite{FALCONN}. 
Figure~\ref{fig:savings} shows the number of samples at which the dimensionality reduction leads to computational savings when assembling the distance matrix (that is, when the cost of projecting and then computing all distances is less than the cost of simply computing all distances), for two examples of ambient dimension ($d=256$ and $d=4096$). 
\begin{figure}[H]
\centering
\includegraphics[width=0.8\textwidth]{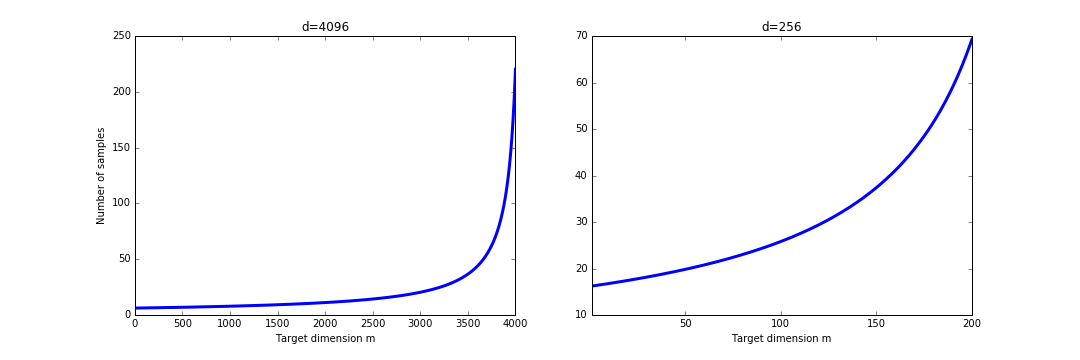}
\caption{The number of sample points at which the random projection method leads to computational savings.}\label{fig:savings}
\end{figure}

Having determined that random projections are worthwhile from a computational point of view, we next consider the distortion in the computation of persistent homology. Figure~\ref{fig:cat} shows the computation of persistent homology for $72$ images from the Columbia Object Image Library~\cite{nene1996columbia}. Each image has size $128\times 128$ and consists of the same object viewed from different perspectives (the display shows one of the images). The underlying topological structure is that of a circle, and hence the persistence barcode for the first homology shows one long bar. In the experiment, $200$ random projections were computed for target dimensions up to $400$, and the first persistent homology was computed on the projected data set. The proportion of projections for which we get an interleaving distance $c=1+\epsilon$ to the first persistent homology of the original data with $\epsilon<0.1$ was recorded. 
As proxy to the interleaving distance we computed the bottleneck distance~\cite{cohen2007stability} between the persistence diagrams at log-scale and set $c$ to be the exponential of this distance, as described in~\cite[Footnote 1]{sheehy2014persistent}. 
The computations were carried out using Vietoris-Rips complexes in the Python wrapper for Ripser~\cite{ctralie2018ripser,bauer2017ripser} and GUDHI~\cite{gudhi:urm}. 

\begin{figure}[H]
\centering
\includegraphics[width=1\textwidth]{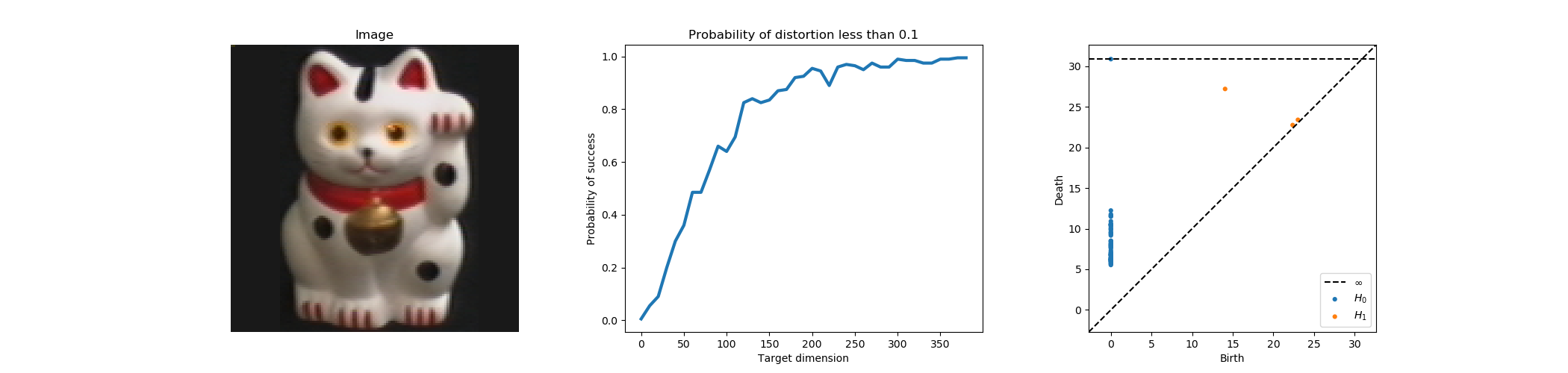}
\caption{Computing the persistent homology under random projections with distortion up to $\epsilon = 0.1$.}\label{fig:cat}
\end{figure}
\section{Conclusion}\label{sec:conclusion}
We have shown that randomized dimensionality reduction results related to the Johnson-Lindenstrauss Theorem carry over without change to the computation of persistent homology with respect to the Euclidean distance.
A consequence is that the target dimension can be chosen independently of the number of points, as a function of the Gaussian width. By relating the Gaussian width of the set of normalized differences of the points to the doubling dimension, the complexity reduction achievable from the randomized projection method is linked to that achievable by other methods, such as the construction of sparse filtrations. It is likely that the Gaussian width can also be used as a tool in the analysis of other reduction methods.
Another direction is suggested by the proof of the Kirszbraun intersection property in terms of Slepian's Lemma, given in Section~\ref{sec:kirszbraun}. It would be interesting to see if this approach generalizes to other
forms of projective clustering such as $k$-center clustering, by using Gordon's inequalities. Finally, a natural question is to what extent such randomized reductions are possible
for other notions of distance. 

\vskip6pt

\enlargethispage{20pt}


\dataccess{The code used to generate the examples in Section~\ref{sec:experiments} is available at \url{https://github.com/lotzma/compressivepersistent}}

\competing{There are not competing interests.}

\funding{The author received no specific funding for this work.}

\ack{I would like to thank Uli Bauer for point out the Kirszbraun intersection property and the reference~\cite{gromov1987monotonicity}, and Michael Kerber and Don Sheehy for feedback and some useful comments on a preliminary draft. 
The motivation to combine ideas from randomized dimension reduction with persistent homology arose from conversations with Jared Tanner during visits to the Alan Turing Institute, and I would like to acknowledge their support for making these visits possible. 
I am grateful to the anonymous referees for various helpful suggestions on improving the paper.}



\vskip2pc

\bibliographystyle{RS} 

\def\cprime{$'$}

\end{document}